\newcommand{\cA}{\mathcal{A}}
\newcommand{\cB}{\mathcal{B}}
\newcommand{\cC}{\mathcal{C}}
\newcommand{\cD}{\mathcal{D}}
\newcommand{\cF}{\mathcal{F}}
\newcommand{\cH}{\mathcal{H}}
\newcommand{\cI}{\mathcal{I}}
\newcommand{\cO}{\mathcal{O}}
\newcommand{\rC}{\mathrm{C}}
\newcommand{\rF}{\mathrm{F}}
\newcommand{\rH}{\mathrm{H}}
\newcommand{\rf}{\mathrm{f}}
\newcommand{\rh}{\mathrm{h}}
\newcommand{\bC}{\mathbb{C}}
\newcommand{\bN}{\mathbb{N}}
\newcommand{\bZ}{\mathbb{Z}}
\newcommand{\si}{\sigma}
\newcommand{\e}{\epsilon}
\newcommand {\norm}[1]{\Vert{#1}\Vert}    
\newcommand{\ad}{\mathrm{ad}}
\author{\textsc{Giuseppe Ruzzi$^{1}$ and Ezio Vasselli$^{2}$}\footnote{Both the authors are supported by the  EU network ``Noncommutative Geometry" MRTN-CT-2006-0031962.}\\
  \null\\
\small{$^{1}$Dipartimento di Matematica, Universit\`a di Roma ``Tor Vergata'',}\\
\small{Via della Ricerca Scientifica, I-00133 Roma,  Italy.}  \\
\small{\texttt{ruzzi@mat.uniroma2.it}} \\[5pt]
\small{$^{2}$Dipartimento di Matematica, Universit\`a di Roma ``La Sapienza'',}\\
\small{Piazzale Aldo Moro 5, I-00185 Roma, Italy.}\\
 \small{\texttt{ vasselli@mat.uniroma2.it  }}\\[20pt]
%
%
}
\date{}
\title{\textsc{Representations\\of\\nets of $\rC^*$-algebras over $S^1$}} 
\begin{document}
\maketitle

\begin{abstract}
In recent times a new kind of representations has been used to describe superselection
sectors of the observable net over a curved spacetime, taking into account of the effects of the fundamental group of the spacetime.
Using this notion of representation, we prove that any net of $\rC^*$-algebras over 
$S^1$ admits faithful representations, 
and when the net is covariant under $\mathrm{Diff}(S^1)$,  it admits representations 
covariant under any amenable subgroup  of $\mathrm{Diff}(S^1)$. 
\end{abstract}

\markboth{Contents}{Contents}


  \theoremstyle{plain}
  \newtheorem{definition}{Definition}[section]
  \newtheorem{theorem}[definition]{Theorem}
  \newtheorem{proposition}[definition]{Proposition}
  \newtheorem{corollary}[definition]{Corollary}
  \newtheorem{lemma}[definition]{Lemma}

  \theoremstyle{definition}
  \newtheorem{remark}[definition]{Remark}
    \newtheorem{example}[definition]{Example}

\theoremstyle{definition}
  \newtheorem{ass}{\underline{\textit{Assumption}}}[section]


\numberwithin{equation}{section}

\section{Introduction}

Nets of $\rC^*$-algebras are the basic objects of study in algebraic quantum field theory and, 
as well-known to the specialists, encode the basic idea that any 
%
%
suitable region\footnote{Open, relatively compact and simply connected subsets of the spacetime.} $Y$ of a spacetime defines an abstract $\rC^*$-algebra $\cA_Y$,
interpreted as the one generated by the quantum observables localized in $Y$; 
from this assumption it is natural to require that there are inclusions morphisms
$\jmath_{Y'Y} : \cA_Y \to \cA_{Y'}$,
$\forall Y \subseteq Y'$,
which, for coherence, must fulfil the equalities
\[
\jmath_{Y''Y'} \circ \jmath_{Y'Y} = \jmath_{Y''Y}
\ \ , \ \
\forall Y \subseteq Y' \subseteq Y''
\]
(see \cite{HK,Haa,Ara}).
In the Minkowski spacetime $X$ the set of  regions $Y \subset X$
is upward directed under inclusion, so the pair 
$(\cA,\jmath)$, $\cA := \{ \cA_Y \}$, $\jmath := \{ \jmath_{Y'Y} \}$,
is indeed a net and we can construct the inductive limit $\vec{\cA} := \lim (\cA,\jmath)$.
In this way, families of Hilbert space representations of $\cA_Y$, $Y \subset X$,
coherent with the inclusion morphisms (that we call {\em Hilbert space representations} of the net)
are obtained by considering representations of $\vec{\cA}$.
This point is important for the applications, because crucial physical properties of the quantum system described 
by $(\cA,\jmath)$, like the charge structure of elementary particles, 
are encoded by certain Hilbert space representations of the net, called {\em sectors} (\cite{DHR1,DHR2,BF}).\\
\indent Now, general relativity and conformal theory lead to consider spacetimes $X$ such that the set of regions
$\{ Y \subset X \}$
is not directed under inclusion anymore, and the above scenario breaks down. 
In this case we should say, to be precise, that $(\cA,\jmath)$ is a precosheaf of $\rC^*$-algebras, 
and the search for Hilbert space representations may be vain (see \cite{RVL}).\\
\indent In recent times a more general notion of representation has been given for "nets" of $\rC^*$-algebras
over generic spacetimes $X$, defined in such a way that the obstacle to get coherence is encoded 
by a family 
$\{ U_{Y'Y} \}_{Y \subseteq Y'}$
of unitaries fulfilling the cocycle relations (see \cite{BR,BFM}).
These, that we simply call {\em representations}, reduce to Hilbert space representations when $X$ 
is simply connected and maintain the properties of charge composition, conjugation and covariance under 
eventual spacetime symmetries, typical of the usual sectors.\\
\indent In a previous paper (\cite{RVL}), we introduced the notion of {\em enveloping net bundle} of 
the given net of $\rC^*$-algebras $(\cA,\jmath)$; this is a net of $\rC^*$-algebras 
$(\overline{\cA},\overline{\jmath})$ 
such that any $\overline{\jmath}_{Y'Y}$ is an isomorphism, and fulfils the universal property 
of lifting any representation of $(\cA,\jmath)$. So the question of existence of representations 
is reduced to nondegeneracy of the canonical embedding 
$\e : (\cA,\jmath) \to (\overline{\cA},\overline{\jmath})$.
We call {\em injective} those nets such that $\e$ is faithful.\\
\indent In the present work we focus on nets of $\rC^*$-algebras defined over $S^1$, 
a remarkable class due to its applications in conformal quantum field theory (\cite{KL,CKL}). 
We show that any net over $S^1$ is injective, so it has faithful representations. 
Moreover, when the net is covariant under the action of $\mathrm{Diff}(S^1)$, we show the 
existence of covariant representations of any amenable subgroup of 
$\mathrm{Diff}(S^1)$.
The technique that we will use shall be the one of approximate the set of proper intervals 
of $S^1$ with finite subsets (roughly speaking, an analogue of the decomposition of 
$S^1$ as a CW-complex in the setting of partially ordered sets), then to show that the
restriction of $(\cA,\jmath)$ on these subsets is injective, and finally to prove
injectivity of the initial net performing an inductive limit.
We have postponed to Appendix \ref{Be} some rather technical computations showing
that injectivity is preserved under inductive limits, a result which plays a key r\^ole 
in the analysis of nets over $S^1$ and that, we hope, could play a similar role 
for other spacetimes too.

\section{Some preliminaries on nets.}
\label{A}

To make the present work self-contained
in this section we recall the basic properties of the main objects of our study, 
namely nets of $\rC^*$-algebras.
All the material presented here appeared in \cite{RVL};
the reader may pass to the next section whenever he is already familiar with that paper.


\subsection{Posets} 
\label{A:a}

A {\em poset} (partially ordered set) is a set endowed with an 
(antisymmetric, reflexive and transitive) order relation $\leq$.
A {\em poset morphism} is a map
$\rf : K \to K'$
such that $o \leq \tilde o$ implies $\rf(o) \leq' \rf(\tilde o)$ for all $o,\tilde o \in K$,
where $\leq'$ is the order relation of $K'$.
A {\em disjointness relation} on $K$ is a symmetric binary relation $\perp$ such that
\[
\tilde o \perp a \ , \ o \leq \tilde o 
\ \Rightarrow \
o \perp a
\ .
\]
A group $G$ is said to be a {\em a symmetry group} for $K$ whenever it acts by automorphisms on $K$,
namely 
$go \leq g \tilde o \Leftrightarrow o \leq \tilde o$ for all $g \in G$ and $o , \tilde o \in K$,
and we assume that, whenever $K$ has a disjointness relation,
$o \perp a \Leftrightarrow go \perp ga$.\smallskip

The classical covariant poset used in algebraic quantum field theory is the set of doublecones in
the Minkowski spacetime, having the inclusion as order relation, the spacelike separation as the
disjointness relation and the Poincar\'e group as the group of symmetries. We shall focus in \S \ref{E}
to the case of proper intervals in $S^1$, of interest in low dimensional quantum field theory.\smallskip 

We now give a brief description of the notion of connectedness and 
simply connectedness for posets and refer the reader to the paper \cite{RVL} 
for details\footnote{The standard way to introduce these topological notions 
makes use of a simplicial set associated to the poset. We prefer 
do not introduce this simplicial set since it will be not 
explicitly used in the present paper.}.  A poset $K$ is \emph{pathwise connected} if for any pair $a,\tilde a\in K$. 
there are two finite sequence $a_1,\ldots a_{n+1}$ and  $o_1,\ldots o_n$ of elements of $K$,  
with $a_1=a$ and $a_{n+1}=\tilde a$, 
satisfying the relations 
\[
   a_i,a_{i+1}\leq o_{i} \ , \ i=1,\ldots, n \ .  
\]
In the sequel, \emph{we will always assume that our poset is pathwise connected}. As already said, 
there is a notion of the first homotopy group $\pi^o_1(K)$ for $K$. The supscript $o$
denotes the base point in $K$  where the homotopy is calculated, however the
isomorphism class does not depend on the choice of $o$ (for this reason,
often we shall write $\pi_1(K)$ without specification of the base point).
We shall say that $K$ is {\em simply connected} whenever $\pi_1(K)$ is trivial. \\
\indent If $X$ is a space having a subbase $K$ of arcwise and simply connected open sets, 
and if $K$ is ordered under inclusion, 
then there is an isomorphism $\pi_1(X) \simeq \pi_1(K)$ (\cite{Ruz}).\smallskip

We now deal with continuous actions of symmetry groups.  
Let $G$ be topological symmetry group acting on a poset $K$. 
and $\cO(e)$ denote the set of open neighbourhoods 
of the identity of the group $G$. Then we define 
\begin{equation}
\label{Ab:0}
o \ll a \ \ \iff \ \ \exists U\in\cO(e) \ , \ go \leq   a  \ , \  \forall g\in U \ . 
\end{equation}
Now, a topological  symmetry group $G$ of $K$ is said to be a \emph{continuous symmetry group of} $K$  if 
\begin{equation}
\label{Ab:1}
\forall o\in K \ , \ \exists a\in K \ , \ \ o\ll a \ , 
\end{equation}
\begin{equation}
\label{Ab:1a}
o \ll a \ \ \Rightarrow \ \ \exists \tilde a\in K \ , \ \ o\ll \tilde a\ll a \ , 
\end{equation}
and 
\begin{equation}
\label{Ab:2}
 o\ll a_1, a_2 \ \ \Rightarrow \ \ \exists \tilde o\in K \ , \ o\ll \tilde o \ll a_1,a_2 \ . 
\end{equation}
This condition is suited for posets arising as subbases of topological $G$-spaces
and, roughly speaking, encodes the idea that the sets 
$\{ go , g \in U \}$, $U \in \cO(e)$,
yield a neighbourhood system for $o$.
Double cones in Minkowski spacetime and the open intervals of $S^1$ are examples of posets 
acted upon continuously (in the above sense) by the Poincar\'e group and the 
$\mathrm{Diff}(S^1)$ group respectively. 
In these cases it is easely seen that 
$a \ll o$ is equivalent to the condition that the closure of $a$ is contained in $o$. 
Note, in addition, that the above conditions are always verified when 
the symmetry group $G$ has the  discrete topology. 
\begin{remark}
The notion of a continuous symmetry group of a poset introduced in the present paper
is different from that used in \cite{RVL}. We prefer this new notion of continuity  because 
it involves in its definition only the poset and the group. The older, instead,  
involved the simplicial set associated to the poset.  
In Appendix \ref{AppB} we shall prove that the new notion of continuity 
is stronger than the older one,  
so that all the results obtained in that paper continue to hold. 
\end{remark}

\subsection{Nets of $\rC^*$-algebras}
\label{Ba}

A {\em net of $\rC^*$-algebras} over the poset $K$ is given by a family 
$\cA := \{ \cA_o \}_{o \in K}$
of unital $\rC^*$-algebras (called the {\em fibres}), and a family 
$\jmath := \{ \jmath_{\tilde oo} : \cA_o \to \cA_{\tilde o} , o \leq \tilde o \}$
of unital monomorphisms (called the {\em inclusion maps}) fulfilling the {\em net relations}
\[
\jmath_{o' \tilde o} \circ \jmath_{\tilde oo} = \jmath_{o'o}
\ \ , \qquad  o \leq \tilde o \leq o'
\ .
\]
In the sequel we shall denote a net of $\rC^*$-algebras by $(\cA,\jmath)_K$.
When every $\jmath_{\tilde oo}$ is an isomorphism we say that $(\cA,\jmath)_K$
is a {\em $\rC^*$-net bundle} and, to be short, we write 
$\jmath_{o \tilde o} := \jmath_{\tilde oo}^{-1}$, $\forall o \leq \tilde o$.
The {\em restriction of} $(\cA,\jmath)_K$ over $S \subset K$ is given by the
same families restricted to elements of $S$ and is denoted by $(\cA,\jmath)_S$.\smallskip

Clearly, the definition of net can be given for other categories and in particular
we shall use the one of Hilbert spaces, especially the case of Hilbert net bundles
(whose net structure is given by unitary operators).\\
\indent In all the cases of interest $K$ shall be a subbase for the topology of a space, 
in general not directed, so if we would use the correct terminology
in the setting of algebraic topology we should use the term 
{\em precosheaf of $\rC^*$-algebras}; neverthless, we prefer to maintain the 
usual term {\em net}, since it is standard in algebraic quantum field theory.\smallskip

A {\em morphism} of nets is written 
\[
(\pi,\rf) : (\cA,\jmath)_K \to (\cB,i)_P \ ,
\]
where $\rf : K \to P$ is a poset morphism and 
$\pi := \{ \pi_o : \cA_o \to \cB_{\rf(o)} \}$
is a family of unital morphisms such that 
$i_{\rf(\tilde o),\rf(o)} \circ \pi_o = \pi_{\tilde o} \circ \jmath_{\tilde oo}$, 
$\forall o \leq \tilde o$. When $\rf$ is the identity we shall write $\pi$ instead
of $(\pi,id_K)$. We say that $(\pi,\rf)$ is \emph{faithful on the fibres} if $\pi_o$ is 
a monomorphism  for any $o$; it is an \emph{isomorphism} when both $\pi$ and $\rf$ are isomorphisms.
We say that a net is {\em trivial} if it is isomorphic to 
the {\em constant net} $(\cC,id)_K$, where $\cC_o \equiv \cF$ for a fixed
$\rC^*$-algebra $\cF$ and any $id_{\tilde oo}$ is the identity of $\cF$.\smallskip


The structures that we introduce in the following lines are familiar in the
setting of quantum field theory, and reflect  Poincar\'e
(M\"obius) symmetry and Einstein causality respectively.
If $G$ is a continuous symmetry group of $K$, then we say that the net $(\cA,\jmath)_K$ is $G$-\emph{covariant}
whenever there are isomorphisms 
$\alpha^g_o : \cA_o \to \cA_{go}$, $\forall o \in K$, $g \in G$,
such that
\[
\alpha^g_{\tilde o} \circ \jmath_{\tilde oo} = \jmath_{g \tilde o , go} \circ \alpha^g_o
\ \ , \ \
\alpha^h_{go} \circ \alpha^g_o = \alpha^{hg}_o
\ \ , \ \ o \leq \tilde o \in K
\ , \
g,h \in G
\ , 
\]
and fulfilling the following \emph{continuity condition}: if 
$\{ g_\lambda \} \subset G$  is a net converging to $e$, then  for any $o\in K$ 
there exists $a\gg o$ 
and an  index $\lambda_a$ such that $g_\lambda o \leq a$, $\forall \lambda \geq \lambda_a$,
and
\begin{equation}
\norm{\jmath_{a\, g_\lambda o}\circ \alpha^{g_\lambda}_o(A)-\jmath_{ao}(A)} \to 0
\ \ , \ \ A\in\cA_o
\ .
\end{equation}
If $(\cA,\jmath,\alpha)_K$ and $(\cB,i,\beta)_K$ are $G$-covariant nets,   a morphism $(\pi,\rf):(\cA,\jmath,\alpha)_K\to(\cB,i,\beta)_K$
is said to be $G$-\emph{covariant} whenever 
\[
 \rf(go)= g\rf(o) \ \ , \ \   \pi_{go}\circ \alpha^g_o = \beta^g_{\rf (o)}\circ \pi_o \ , \qquad o\in K \ , \ g\in G \ .  
\] 
Finally, when $K$ has  a causal  disjointness relation $\perp$, 
we say that the net $(\cA,\jmath)_K$ is {\em causal} whenever 
\[
[ \ \jmath_{a o}(t) \ , \ \jmath_{a \tilde o}(s) \ ] = 0 \ , \qquad 
 o \perp \tilde o \ , \ \ o,\tilde o\leq a 
\ , 
\]
where  $t \in \cA_{o}$ and  $s \in \cA_{\tilde o}$.


%
\subsection{The enveloping net bundle and injectivity}
\label{Bd}

The importance of net bundles in the analysis of nets resides in the following fact. 
Let $(\cA,\jmath)_K$ be a net bundle. Since the inclusion maps $\jmath$ are isomorphism, they 
induce,   for any $o\in K$, an action, \emph{the holonomy action},  
\begin{equation}
\label{Ba:1}
\jmath_* : \pi^o_1(K) \to {\bf aut}\cA_o \ ,
\end{equation}
of the homotopy group $\pi^o_1(K)$ into the fibre $\cA_o$. The $\rC^*$-dynamical system $(\cA_o,\pi_1^o(K),\jmath_*)$ is unique up to isomorphism
at varying of $o$ in $K$ and is a complete invariant of the net bundle since  the
net bundle can be reconstructed (up to isomorphism) starting from the $\rC^*$-dynamical system. 
The net bundle $(\cA,\jmath)_K$ is trivial 
if and only if $\jmath_*$ is the trivial action. We shall refer to $(\cA_o,\pi_1^o(K),\jmath_*)$
as the \emph{holonomy dynamical system} of the net bundle. \\ 
\indent On these grounds it is crucial to understand whether and how a net can  be 
embedded into a net bundle. \smallskip

It turns out that any net of $\rC^*$-algebras can be embedded into a $\rC^*$-net bundle. 
To be precise, the \emph{enveloping net bundle} 
of  a net of $\rC^*$-algebras $(\cA,{\jmath})_K$ is a  
$\rC^*$-net bundle by $(\overline{\cA},\overline{\jmath})_K$, which comes
equipped with a morphism  $\e : (\cA,\jmath)_K \to (\overline{\cA},\overline{\jmath})_K$, called
the {\em canonical embedding}, satisfying the following remarkable universal properties:
given morphisms with values in $\rC^*$-net bundles,
\[
(\varphi,\rh) , (\theta,\rh) : (\overline{\cA} , \overline{\jmath})_K \to (\cC,y)_P
\ \ , \ \
(\psi,\rf) : (\cA,\jmath)_K \to (\cB,\imath)_S
\ ,
\]
we have
\begin{equation}
\label{Bd:11}
\left\{
\begin{array}{ll}
(\varphi,\rh) \circ \e = (\theta,\rh) \circ \e  \ \Rightarrow \ \varphi = \theta \ ,
\\
\exists !  \ (\psi^\uparrow,\rf) \ {\mathrm{such \ that}} \ (\psi^\uparrow,\rf) \circ \e = (\psi,\rf) \ ,
\end{array}
\right.
\end{equation}
where $(\psi^\uparrow,\rf)$ is the pullback
\begin{equation}
\label{Bd:11a}
(\psi^\uparrow,\rf) : (\overline{\cA},\overline{\jmath})_K \to (\cB,\imath)_S
\ .
\end{equation}
These properties characterize the enveloping net bundle, that is, 
it is the unique, up to isomorphism,  
$\rC^*$-net bundle satisfying the above relations, and this leads to the  following classification:  a net of $\rC^*$-algebras is {\em degenerate} if its enveloping net bundle 
vanishes, and is {\em nondegenerate} otherwise. A nondegenerate 
net of $\rC^*$-algebras is {\em injective} if the canonical embedding is a monomorphism.
\begin{remark}
(1) When $K$ is simply connected the  $(\overline{\cA},\overline{\jmath})_K$ is a trivial $\rC^*$-net bundle with fibres
isomorphic to the Fredenhagen universal $\rC^*$-algebra of $(\cA,\jmath)_K$ (see \cite{Fre}).\\[5pt]
(2) If $G$ is a continuous symmetry group, then the enveloping net bundle of a  
$G$-covariant net is $G$-covariant as well. 
\end{remark}

We can now state the functoriality property and its relation with injectivity: 
for any morphism 
$(\rho,\rf):(\cA,\jmath)_K \to (\cD,k)_P$ there exists a morphism 
$(\overline{\rho},\rf) : 
(\overline{\cA},\overline{\jmath})_K \to (\overline{\cD},\overline{k})_P$ which fulfils the 
the relation
\[
(\overline{\rho},\rf)\circ \e = \tilde \e \circ (\rho,\rf) \ , 
\]
where $\e$ and $\tilde \e$ are, respectively, the canonical embeddings of 
the nets  $(\cA,\jmath)_K$ and $(\cD,k)_P$. This makes the assignment of the enveloping net bundle a functor. 
%
%
\begin{remark}
\label{Bd:12}
(1) Note that if $(\rho,\rf)$ is faithful on the fibres and $(\cD,k)_P$ is injective, 
then $(\cA,\jmath)_K$ is injective too. \\[5pt]
(2) The functor assigning the enveloping net bundle preserves inductive limits (Prop.\ref{Beb:3}). 
\end{remark}

\subsection{States and representations}
\label{C}

A \emph{state} of a net of $\rC^*$-algebras $(\cA,{\jmath})_K$ is a family of states of $\rC^*$-algebras
$\omega := \{ \omega_o : \cA_o \to \bC \ , o \in K \}$ 
fulfilling
\begin{equation}
\label{Ca:1}
\omega_o = \omega_{a} \circ {\jmath}_{ao}
\ \ , \ \ 
o \leq a
\ .
\end{equation}
It turns out that the set of states 
of a $\rC^*$-net bundle $(\cA,\jmath)_K$ is in one-to-one correspondence with 
the set of invariant states of the associated holonomy dynamical system 
$(\cA_o,\pi^o_1(K),\jmath_*)$. Since in a $\rC^*$-dynamical system having amenable group  
invariant states always exist, 
we conclude that when the fundamental group of $K$ is amenable then any
nondegenerate net has states; in fact we can compose states of the enveloping net bundle
with the canonical embedding.
If $(\cA,\jmath,\alpha)_K$ is $G$-covariant net, then a state  of the net $\varphi$
is said to be $G$-\emph{invariant} whenever
\[
\varphi_{go} \circ \alpha^g_o := \varphi_o
\ \ , \ \
\forall o \in K
\ , \
g \in G
\ .
\]
The next result concerns the existence of $G$-invariant states and is proved in \cite{RVL}:
\begin{proposition}[\cite{RVL}]
\label{Ca:6} 
Let $G$ be amenable. Then:
(i)  Any $G$-covariant $\rC^*$-net bundle having states has $G$-invariant states. 
(ii) If $\pi_1(K)$ is amenable, then any nondegenerate $G$-covariant net 
over $K$ has $G$-invariant states.
\end{proposition}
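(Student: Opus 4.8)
The plan is to prove (i) by exhibiting a $G$-invariant state as a fixed point of the natural $G$-action on the (nonempty, compact, convex) state space of the net bundle, using the fixed-point characterisation of amenability, and then to deduce (ii) by applying (i) to the enveloping net bundle and pulling the resulting invariant state back along the canonical embedding.

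For (i), let $(\cA,\jmath,\alpha)_K$ be a $G$-covariant $\rC^*$-net bundle and let $S$ denote its set of states. First I would record that $G$ acts on $S$: given $\omega\in S$ and $g\in G$, set
\[
(g\cdot\omega)_o \;:=\; \omega_{g^{-1}o}\circ \alpha^{g^{-1}}_o \ , \qquad o\in K \ .
\]
A short computation using the covariance relation $\alpha^{g^{-1}}_a\circ\jmath_{ao}=\jmath_{g^{-1}a,g^{-1}o}\circ\alpha^{g^{-1}}_o$ together with the compatibility $\omega_{g^{-1}o}=\omega_{g^{-1}a}\circ\jmath_{g^{-1}a,g^{-1}o}$ shows that $g\cdot\omega$ is again a state of the net, while the cocycle identity $\alpha^{g^{-1}}_{h^{-1}o}\circ\alpha^{h^{-1}}_o=\alpha^{(hg)^{-1}}_o$ gives $h\cdot(g\cdot\omega)=(hg)\cdot\omega$; moreover $\omega$ is $G$-invariant in the sense of \S\ref{C} precisely when it is a fixed point of this action. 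Next I would observe that $S$, viewed inside the product of the fibre state spaces with the product weak-$*$ topology, is convex and, being cut out by the (weak-$*$ closed) compatibility conditions \eqref{Ca:1}, is weak-$*$ compact; it is nonempty by hypothesis. The action just defined is by affine homeomorphisms. The remaining point, which is the technical heart of the argument, is the continuity of the action $G\times S\to S$: this is where I would invoke the continuity condition built into the definition of a $G$-covariant net, which controls $\jmath_{a,g_\lambda o}\circ\alpha^{g_\lambda}_o(A)$ as $g_\lambda\to e$ and, transported through the isomorphisms of the net bundle, yields continuity of $g\mapsto (g\cdot\omega)_o(A)$ for each fixed $o$ and $A\in\cA_o$. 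With continuity in hand, amenability of $G$ furnishes, via the fixed-point property of amenable groups acting on compact convex sets, a fixed point $\varphi\in S$, which is the desired $G$-invariant state.

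For (ii), let $(\cA,\jmath,\alpha)_K$ be nondegenerate and $G$-covariant. Its enveloping net bundle $(\overline{\cA},\overline{\jmath})_K$ is again $G$-covariant, with action $\overline{\alpha}$, and the canonical embedding $\e:(\cA,\jmath,\alpha)_K\to(\overline{\cA},\overline{\jmath},\overline{\alpha})_K$ is a $G$-covariant morphism, i.e.\ $\e_{go}\circ\alpha^g_o=\overline{\alpha}^g_o\circ\e_o$. Since $\pi_1(K)$ is amenable, the holonomy dynamical system $(\overline{\cA}_o,\pi^o_1(K),\overline{\jmath}_*)$ admits an invariant state, and hence---by the correspondence between states of a net bundle and invariant states of its holonomy dynamical system recalled in \S\ref{C}---the net bundle $(\overline{\cA},\overline{\jmath})_K$ has states. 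Applying part (i) to this $G$-covariant net bundle produces a $G$-invariant state $\overline{\varphi}$. I would then set $\varphi_o:=\overline{\varphi}_o\circ\e_o$: compatibility of $\varphi$ follows from the morphism relation $\overline{\jmath}_{ao}\circ\e_o=\e_a\circ\jmath_{ao}$, while
\[
\varphi_{go}\circ\alpha^g_o=\overline{\varphi}_{go}\circ\e_{go}\circ\alpha^g_o=\overline{\varphi}_{go}\circ\overline{\alpha}^g_o\circ\e_o=\overline{\varphi}_o\circ\e_o=\varphi_o
\]
exhibits $\varphi$ as the sought $G$-invariant state of the original net.

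I expect the main obstacle to be the continuity of the $G$-action on the state space in part (i): since the components $\omega_{g^{-1}o}$ live on varying fibres, establishing that $g\mapsto(g\cdot\omega)_o(A)$ has the regularity required by the fixed-point theorem (equivalently, that the relevant functions lie in the space on which $G$ carries an invariant mean) is the step that genuinely uses the continuity axiom of a covariant net, the net-bundle structure entering through the transport of states between fibres. By contrast, the reduction in (ii) and the verification that the resulting functional is a bona fide state are routine once the covariance of the canonical embedding is recorded.
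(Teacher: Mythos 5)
Your proposal is correct and follows precisely the route this paper intends: the paper itself gives no proof (it cites \cite{RVL}), but its discussion in \S\ref{C} --- states of a $\rC^*$-net bundle correspond to invariant states of the holonomy dynamical system, the enveloping net bundle of a $G$-covariant net is again $G$-covariant, and states are pulled back along the canonical embedding --- is exactly your reduction of (ii) to (i), while your fixed-point argument for (i) is the standard amenability argument. The continuity you single out as the main obstacle does indeed hold: taking a common majorant $a \gg o$ as in the covariance axiom and using the compatibility $\omega_{g^{-1}o}=\omega_a\circ\jmath_{a\, g^{-1}o}$, one gets $|(g\cdot\omega)_o(A)-\omega_o(A)|\leq \| \jmath_{a\, g^{-1}o}\circ\alpha^{g^{-1}}_o(A)-\jmath_{ao}(A)\| \to 0$ uniformly over the state space (states have norm one), so the affine action on the compact convex set $S$ is jointly continuous and the fixed-point property of amenable groups applies.
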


A \emph{representation} of the net $(\cA,{\jmath})_K$ is given by a pair
$(\pi,U)$, where $\pi := \{ \pi_o : \cA_o \to \cB(\cH_o)  \}$ is a family
of representations and 
$U := \{ U_{\tilde oo} : \cH_o \to \cH_{\tilde o} , o \leq \tilde o \}$ is a family of unitary
operators fulfilling the relations
\begin{equation}
\label{Cb:2}
U_o \in ( \pi_o , \pi_{\tilde o} \circ \jmath_{\tilde oo} )
\ \ , \ \
U_{o' \tilde o} \circ U_{\tilde oo} = U_{o'o}
\ \ , \ \
\forall o \leq \tilde o  \leq o'
\ .
\end{equation}
We call $U$ the family of {\em inclusion operators}.
We say that $(\pi,U)$ is {\em faithful} whenever $\pi_o$ is faithful
for any $o \in K$, and that $(\pi,U)$ is a {\em Hilbert space representation}
whenever any $U_{\tilde oo}$ is the identity on a fixed Hilbert space
(and in this case we write $(\pi,\mathbbm{1})$).

It follows from (\ref{Cb:2}) that the pair $(\cH,U)_K$, $\cH := \{ \cH_o \}$,
is a Hilbert net bundle in the sense of the previous section. Using the
adjoint action we obtain the $\rC^*$-net bundle $( \cB \cH , \ad U )_K$,
so that $(\pi,U)$ can be regarded as a morphism
\[
\pi : (\cA,\jmath)_K \to ( \cB \cH , \ad U )_K \ .
\]
Hilbert space representations correspond, in essence, to morphisms with values
in trivial nets. When $K$ is simply connected any $( \cB \cH , \ad U )_K$ is
trivial, so we have only Hilbert space representations. When $K$ is not
simply connected it is very easy to give examples of nets having faithful
representations but no Hilbert space representations (see \cite{RVL}),
and this is the reason why it is convenient to use the more general definition.
In algebraic quantum field theory it is customary to use Hilbert spaces
representations, also because the usual background is the Minkowski 
spacetime that is simply connected. In curved spacetimes and in $S^1$
it is of interest to give results stating the existence of (possibly)
faithful representations, and this is the motivation of our work.\smallskip

Let us focus for a moment on $\rC^*$-net bundles.  There exists a 
one-to-one correspondence between
representations $(\pi,U)$ of $(\cA,\jmath)_K$ and covariant representations $(\pi_o,U_*)$ of
the holonomy dynamical system $(\cA_o,\pi^o_1(K),\jmath_*)$. In particular 
$U_*$, which  is a unitary representation of the fundamental group of $K$, is nothing but that 
the holonomy of the Hilbert net bundle $(\cH,U)_K$ (see \ref{Ba:1}).
Since any $\rC^*$-dynamical system 
has faithful covariant representations, we conclude that any $\rC^*$-net bundle
has faithful representations.\\
\indent We now return to the general case in which $(\cA,\jmath)_K$ is a net of $\rC^*$-algebras.
Using the pullback (see (\ref{Bd:11a})), we see that any representation $(\pi,U)$ of
$(\cA,\jmath)_K$ extends to a representation $(\pi^\uparrow,U)$ of 
$(\overline{\cA},\overline{\jmath})_K$ and this yields a one-to-one correspondence
between representations of $(\cA,\jmath)_K$ and those of its enveloping net bundle.
Thus, as $\rC^*$-net bundles are faithfully represented, we conclude that 
{\em a net of $\rC^*$-algebras is injective if, and only if, it has faithful representations}.\smallskip

Let $G$ be a continuous symmetry group of $K$ and $(\cA,\jmath,\alpha)_K$ 
a $G$-covariant net. A \emph{$G$-covariant representation} of $(\cA,\jmath,\alpha)_K$ is a
representation $(\pi,U)$ of $(\cA,\jmath)_K$ with a strongly continuous family $\Gamma$ 
of unitaries
$\Gamma^g_o : \cH_o \to \cH_{go}$, $g \in G$, $o \in K$, 
such that
\[
\Gamma^h_{go} \circ \Gamma^g_o = \Gamma^{hg}_o
\ \ , \ \
\ad \Gamma^g_o \circ \pi_o = \pi_{go} \circ \alpha^g_o
\ \ , \ \
\Gamma^g_{\tilde o} \circ U_{\tilde oo} = U_{\tilde go \ go} \circ \Gamma^g_o
\ ,
\]
for all $g,h \in G$, $o \leq \tilde o \in K$.
With the term \emph{strongly continuous} we mean the following property:
if $\{ g_\lambda \} \subset G$ converges to the identity, 
then for any  $o\in K$  there exists $a \gg o$  and an index 
$\lambda_a$ such that $g_{\lambda}o\leq a$ for any $\lambda\geq \lambda_a$ and   
\begin{equation}
\label{Cb:12a}
\| U_{a\, g_\lambda o}\, \Gamma^{g_\lambda}_o \Omega -\Omega\| \to 0 \ ,\qquad \forall \Omega\in\cH_a \ . 
\end{equation}
We have the following result (see, as usual, \cite{RVL}):
\begin{proposition}
\label{Cb:13}
Let $K$ be a poset with amenable fundamental group and 
$G$ an amenable continuous symmetry group of $K$. 
Then every injective, $G$-covariant net of $\rC^*$-algebras over $K$ has
strongly continuous $G$-covariant representations.
\end{proposition}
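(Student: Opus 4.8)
The plan is to reduce the statement to a Gelfand--Naimark--Segal (GNS) construction on the enveloping net bundle, fed by a $G$-invariant state manufactured from the amenability hypotheses. First I would pass from $(\cA,\jmath,\alpha)_K$ to its enveloping net bundle $(\overline{\cA},\overline{\jmath})_K$, which is again $G$-covariant by the remark following \eqref{Bd:11a}, say with covariance isomorphisms $\overline{\alpha}$, and whose canonical embedding $\e$ is itself $G$-covariant. Since representations of a net are in one-to-one correspondence with those of its enveloping net bundle via the pullback, as recalled in \S\ref{C}, and this correspondence respects the $G$-action, it suffices to produce a strongly continuous $G$-covariant representation of the $\rC^*$-net bundle $(\overline{\cA},\overline{\jmath},\overline{\alpha})_K$: composing it with $\e$ then yields the desired representation of the original net, the conditions \eqref{Cb:12a} being expressed solely through the unchanged operators $U$ and $\Gamma$. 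Here injectivity is used only through nondegeneracy, which guarantees that $(\overline{\cA},\overline{\jmath})_K$ does not vanish.

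The next step is to manufacture a $G$-invariant state of the enveloping net bundle. Being a nonvanishing $\rC^*$-net bundle, its holonomy dynamical system $(\overline{\cA}_o,\pi_1^o(K),\overline{\jmath}_*)$ admits $\pi_1(K)$-invariant states because $\pi_1(K)$ is amenable; by the correspondence between states of a net bundle and invariant states of its holonomy system, $(\overline{\cA},\overline{\jmath})_K$ has states. Proposition~\ref{Ca:6}(i), applied with $G$ amenable, then upgrades one of these to a $G$-invariant state $\overline{\varphi}=\{\overline{\varphi}_o\}$, so that $\overline{\varphi}_o=\overline{\varphi}_a\circ\overline{\jmath}_{ao}$ and $\overline{\varphi}_{go}\circ\overline{\alpha}^g_o=\overline{\varphi}_o$. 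This is the step that consumes both amenability assumptions and, because $G$ need not be locally compact, is exactly where an invariant-mean argument, rather than a regular-representation argument, is indispensable.

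With $\overline{\varphi}$ in hand I would run the GNS construction fibrewise. For each $o$ let $(\pi_o,\cH_o,\Omega_o)$ be the GNS triple of $\overline{\varphi}_o$. The net relation $\overline{\varphi}_o=\overline{\varphi}_a\circ\overline{\jmath}_{ao}$, together with the fact that $\overline{\jmath}_{ao}$ is an isomorphism, yields unitaries $U_{ao}\colon\cH_o\to\cH_a$ determined by $U_{ao}\pi_o(A)\Omega_o=\pi_a(\overline{\jmath}_{ao}(A))\Omega_a$ and $U_{ao}\Omega_o=\Omega_a$; uniqueness on the cyclic vector gives the cocycle identity $U_{o'\tilde o}U_{\tilde oo}=U_{o'o}$, so $(\pi,U)$ is a representation of the bundle. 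Symmetrically, the invariance $\overline{\varphi}_{go}\circ\overline{\alpha}^g_o=\overline{\varphi}_o$ produces unitaries $\Gamma^g_o\colon\cH_o\to\cH_{go}$ with $\Gamma^g_o\pi_o(A)\Omega_o=\pi_{go}(\overline{\alpha}^g_o(A))\Omega_{go}$ and $\Gamma^g_o\Omega_o=\Omega_{go}$; the relations $\Gamma^h_{go}\Gamma^g_o=\Gamma^{hg}_o$, $\ad\Gamma^g_o\circ\pi_o=\pi_{go}\circ\overline{\alpha}^g_o$ and $\Gamma^g_{\tilde o}U_{\tilde oo}=U_{g\tilde o,go}\Gamma^g_o$ again follow from uniqueness on the cyclic vectors, using the corresponding identities for $\overline{\alpha}$ and $\overline{\jmath}$.

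It remains to verify strong continuity, and this is the step I expect to be the genuine technical obstacle. Given a net $\{g_\lambda\}\to e$ and $o\in K$, the continuity condition built into the $G$-covariant net furnishes $a\gg o$ and $\lambda_a$ with $g_\lambda o\leq a$ and $\norm{\overline{\jmath}_{a\,g_\lambda o}\circ\overline{\alpha}^{g_\lambda}_o(A)-\overline{\jmath}_{ao}(A)}\to 0$ for $A\in\overline{\cA}_o$. Evaluating on the dense set of vectors $\pi_o(A)\Omega_o$ one computes
\[
U_{a\,g_\lambda o}\Gamma^{g_\lambda}_o\pi_o(A)\Omega_o-U_{ao}\pi_o(A)\Omega_o
=\pi_a\big(\overline{\jmath}_{a\,g_\lambda o}\overline{\alpha}^{g_\lambda}_o(A)-\overline{\jmath}_{ao}(A)\big)\Omega_a ,
\]
whose norm is bounded by $\norm{\overline{\jmath}_{a\,g_\lambda o}\overline{\alpha}^{g_\lambda}_o(A)-\overline{\jmath}_{ao}(A)}$ and hence tends to $0$. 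As the operators $U_{a\,g_\lambda o}\Gamma^{g_\lambda}_o$ are unitary, an $\eps/3$ estimate propagates this convergence from the dense set to all of $\cH_o$, so that $U_{a\,g_\lambda o}\Gamma^{g_\lambda}_o$ converges strongly to $U_{ao}$; under the identification of $\cH_o$ with its image in $\cH_a$ this is precisely \eqref{Cb:12a}. Composing the resulting strongly continuous $G$-covariant representation of the bundle with the canonical embedding $\e$ then completes the proof.
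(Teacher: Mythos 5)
Your proof is correct, and it is essentially the expected one: the paper itself states Proposition~\ref{Cb:13} without proof (deferring to \cite{RVL}), and the argument there is precisely the route you take --- pass to the $G$-covariant enveloping net bundle, obtain a $G$-invariant state from the two amenability hypotheses via Proposition~\ref{Ca:6}, and run a fibrewise covariant GNS construction, with the bundle structure (isomorphic inclusion maps) being what makes the intertwiners $U_{ao}$ unitary rather than merely isometric. Your reading of \eqref{Cb:12a} via the identification $U_{ao}\colon\cH_o\to\cH_a$, and your observation that the statement as phrased only consumes injectivity through nondegeneracy (injectivity being what one would need to additionally get faithfulness), are both accurate.
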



\section{Nets over  $S^1$}
\label{E}

Let $\cI$ be the poset formed by the set of connected, open intervals 
of $S^1$ having closure $cl(o)$ properly contained in $S^1$, ordered by inclusion; 
that is, $o\leq a$ if, and only if, $o\subseteq a$. 
The homotopy group of this poset is $\mathbb{Z}$, since $\cI$ 
is a base for the topology of $S^1$. By a \emph{net of $\rC^*$-algebras over $S^1$} we mean
a net of $\rC^*$-algebras over $\cI$. \\ 
\indent On $\cI$ there is a natural causal disjointness relation: $o\perp a$ if, and only if, 
$o\cap a=\emptyset$. Important symmetries for nets over $S^1$ are given by $\mathrm{Diff}(S^1)$ 
or the M\"obius subgroup. These groups act continuously on $S^1$ and, hence,
on the poset $\cI$ as well, according to (\ref{Ab:1},\ref{Ab:1a},\ref{Ab:2}). These groups are non-amenable.
However there are important amenable subgroups: the rotation group, the semidirect product 
of the translations and the dilations. \\  
\indent In the present section we show that any net of $\rC^*$-algebras over $S^1$ is injective, 
so it admits faithful representations. As a consequence any such a net has states and, 
if the net is covariant under $\mathrm{Diff}(S^1)$, states which are 
invariant under any amenable subgroup of $\mathrm{Diff}(S^1)$. \\ 
\indent To prove injectivity, we shall follow the strategy of 
finding a family of finite, "approximating" subposets of $\cI$, that we call {\em cylinders}, 
where injectivity can be established.

\subsection{Idea and scheme of the proof}
\label{Da}
A strategy for proving injectivity  is suggested by the analysis 
of inductive systems of nets (see in appendix).  
Let $(\cA,\jmath)_K$ be a net of $\rC^*$-algebras over a poset $K$.
Assume that there is a family $\{K^\alpha\}$ of subsets of $K$ satisfying 
the following conditions:
\begin{enumerate}
\item the family $\{K^\alpha\}$
is upward directed under inclusion and  $K$ is the inductive limit poset of  $\{K^\alpha\}$
(each $K^\alpha$ equipped with the order relation inherited by $K$); 
\item the net $(\cA,\jmath)_{K^\alpha}$, that is the restriction  of $(\cA,\jmath)_K$ to $K^\alpha$, is injective for any $\alpha$;
\end{enumerate}
Condition 1  says that the net $(\cA,\jmath)_K$ is  the inductive limit of the system $(\cA,\jmath)_{K^\alpha}$. Condition 2 implies the injectivity of the inductive limit
net (Theorem \ref{Beb:10}). \smallskip

Although  it is a hard problem, even impossible in some cases,  to find  the right family of subsets of a poset 
where injectivity can be established, this problem, in the case of $S^1$, can be fully solved. We briefly explain how.    
Given a net $(\cA,\jmath)_{\cI}$ we will find a sequence $\{\cI_N\}$,  $N\in\mathbb{N}$, 
of subsets of $\cI$
satisfying the condition 1. 
Concerning condition 2.,  first we will construct (using $\cI_N$)  a finite poset $P_N$ 
and show that the net $(\cA,\jmath)_{\cI_N}$ embeds, faithfully on the fibres, into a suitable net over $P_N$. Secondly, we shall show that any net of $\rC^*$-algebras over $P_N$ 
is injective. These facts imply that the restrictions 
$(\cA,\jmath)_{\cI_N}$ are injective for any $N$. \\
\indent The proof that any net over $P_N$ is injective relies on the isomorphism  
between $P_N$ and  an abstract poset $C_N$, called the $N$-cylinder, and on the fact 
that any net over this poset is injective.

\subsection{Nets over the $N$-Cylinder}
\label{Db}
We now introduce a class $\{ C_N , N \in \bN \}$ of finite posets. 
These are of interest for two reasons; the first one is that any net of 
$\rC^*$-algebras over some $C_N$ is injective, and the second one is that,
as we shall explain in the following, each $C_N$ arises from a suitable 
simplicial approximation of the circle. \bigskip 

As we shall see soon $C_N$ 
can be seen as a lattice of $N^2$ elements on a cylinder of finite height. 
To deal with periodicity 
we shall use the equivalence relation $mod$ $N$ with the following convention: 
we choose as representative elements of the 
classes associated with the equivalence relation $mod$ $N$ the numbers $1, 2,\ldots, N$. 
So, for instance, for $N=4$  we have, $(0)_4=4$, $(-1)_4=3$, $(5)_4=1$, etc... . \\
\indent Using this convention, elements of the $N$-\emph{cylinder} $C_N$ are pairs $(i,l)$ with $i,l\in \{1,\ldots, N\}$. 
We shall think of $C_N$ as a matrix whose rows and columns are indexed by $l$ and $i$ 
respectively. The order relation is defined 
inductively, as follows: given an element $(i,l)$ of the $l$-row, with $l < N$, 
it has only two majorants in the $(l+1)$-row, given by
\begin{equation}
\label{Db:1}
 (i,l) < (i,l+1) \ , \ \ \ \ (i,l) < ((i-1)_N,l+1) \ . 
\end{equation}
Finally, the relation among $(i,l)$ and that of the $(l+t)$-rows with $t>1$ 
is obtained by transitivity. For $N=4$, the poset $C_4$ is represented by the following 
diagram,
\[
\xymatrix{
(4,4)&         (1,4)&               (2,4)&              (3,4) &               (4,4)\\ 
(4,3)\ar[u]  & (1,3)\ar[u]\ar[lu] & (2,3)\ar[u]\ar[lu]& (3,3) \ar[u] \ar[lu]& (4,3)\ar[u]\ar[lu] \\
(4,2)\ar[u] &  (1,2)\ar[u]\ar[lu] & (2,2)\ar[u]\ar[lu]& (3,2)\ar[u]\ar[lu] &  (4,2)\ar[u]\ar[lu] \\ 
(4,1)\ar[u]&   (1,1)\ar[u]\ar[lu] & (2,1)\ar[u]\ar[lu]& (3,1)\ar[u]\ar[lu] &  (4,1)\ar[u]\ar[lu] 
}
\]
where, for simplicity, the first column is the repetition of the last.  
The order relation is represented by an arrow from the 
smaller element to the greater one. So,   $C_N$ has $N$ maximal elements, 
those belonging to the $N$-row, and $N$ minimal elements, those belonging to 
the $1$-row. \smallskip

The rest of the section is devoted to proving that any net of $\rC^*$-algebras
$(\cA,\jmath)_{C_N}$  admits a faithful representation, a  property  equivalent to 
injectivity (see \S \ref{C}).  To this end we shall use  
an idea of Blackadar \cite{Bla}.  
Consider the algebras associated with the maximal elements: $\cA_{(i,N)}$. 
Take a cardinal $\kappa$ greater than the cardinality of any such an algebra. 
Let $\rho_i$ denote the tensor product of
the universal representation of the algebra $\cA_{(i,N)}$ and of $1_{\kappa}$. 
Then define 
\begin{equation}
\label{Db:2}
\pi_{(i,l)}:= \rho_{i}\circ \jmath_{(i,N)(i,l)} \ , \qquad l=1,2,\ldots, N \ .
\end{equation}
In words, the representation of the algebras associated with elements of the $i$-column
is obtained by restricting $\rho_i$ to such algebras. In particular $\pi_{(i,N)}=\rho_i$. 
In the case of $N=4$ we will label the columns of the above diagram as follows 
\[
\xymatrix{
\rho_4& \rho_1 & \rho_2 & \rho_3 & \rho_4 \\
(4,4)&         (1,4)&               (2,4)&              (3,4) &               (4,4)\\ 
(4,3)\ar[u]  & (1,3)\ar[u]\ar[lu] & (2,3)\ar[u]\ar[lu]& (3,3) \ar[u] \ar[lu]& (4,3)\ar[u]\ar[lu] \\
(4,2)\ar[u] &  (1,2)\ar[u]\ar[lu] & (2,2)\ar[u]\ar[lu]& (3,2)\ar[u]\ar[lu] &  (4,2)\ar[u]\ar[lu] \\ 
(4,1)\ar[u]&   (1,1)\ar[u]\ar[lu] & (2,1)\ar[u]\ar[lu]& (3,1)\ar[u]\ar[lu] &  (4,1)\ar[u]\ar[lu] 
}
\]
Since universal representations and the inclusion 
maps are faithful,  any representation $\pi_{(i,l)}$ is faithful. \\
\indent We now define the inclusion operators. 
We proceed by defining the inclusion operators from a $l$-row to $(l-1)$-row, starting from $l=N$;
the others will be obtained by composition. To this end we note that the maximal element $(i,N)$
has two minorants in the $(N-1)$-row, $(i,N-1)$ and $((i+1)_N, N-1)$. 

According to Definition \ref{Db:2} we take the identity operator $\mathbbm{1}$ as the inclusion
operator from $(i,N-1)$ and $(i,N)$, because these two elements belong to the same column $i$. 
Concerning the inclusion operator from $((i+1)_N, N-1)$ to $(i,N)$, 
%
%
the representations $\pi_{(i,N)}\circ \jmath_{(i,N)\,((i+1)_N,N-1)}$ 
and $\pi_{((i+1)_N,N-1)}$ are unitarily equivalent, because  
they are unitarily equivalent to tensor product of the universal representation of the algebra 
$\cA_{(i+1)_N, N-1)}$ and   $1_\kappa$ 
(see \cite{Bla}). So, there is a unitary operator $V_{i,(i+1)_N}$ such that 
\begin{equation}
\label{Db:3}
  V_{i,(i+1)_N}\, \pi_{((i+1)_N,N-1)} = \pi_{(i,N)}\circ \jmath_{(i,N)\,((i+1)_N,N-1)}
\, V_{i,(i+1)_N} \ . 
\end{equation}
So we take $V_{i,(i+1)_N}$  as inclusion operator from $((i+1)_N, N-1)$ to $(i,N)$; the reason way 
it is labelled only by the column indices will become 
clear soon. Given an element $(i,l)$, with $1<l<N$,   
consider the minorants $(i,l-1)$ and $((i+1)_N, l-1)$. As before we take 
the identity $\mathbbm{1}$ as the inclusion operator from $(i,l-1)$ and $(i,l)$, since they belong
to the same column. But the important fact is that we may 
take the same operator  $V_{i,(i+1)_N}$ satisfying 
equation (\ref{Db:3}) as inclusion operator from $((i+1)_N, l-1)$ to $(i,l)$. In fact 
by using equation (\ref{Db:3}) and Definition (\ref{Db:2}) we have 
\begin{align*}
V_{i,(i+1)_N} & \, \pi_{((i+1)_N,l-1)}   = \\
& =   V_{i,(i+1)_N}\, \rho_{(i+1)_N}\circ \jmath_{((i+1)_N,N)\, ((i+1)_N,l-1)}   \\
& = V_{i,(i+1)_N}\, \rho_{(i+1)_N} \circ \jmath_{((i+1)_N,N)\, ((i+1)_N,N-1)} \circ 
\jmath_{((i+1)_N,N-1)\,  ((i+1)_N,l-1)}   \\ 
& = V_{i,(i+1)_N}\, \pi_{((i+1)_N,N-1)}\circ \jmath_{((i+1)_N,N-1)\,  ((i+1)_N,l-1)}  \\ 
& = \pi_{(i,N)}\circ\jmath_{(i,N)\,((i+1)_N,l-1)} \, V_{i,(i+1)} \\ 
& = \pi_{(i,N)}\circ\jmath_{(i,N)\,(i,l)} \circ \jmath_{(i,l)\,((i+1)_N,l-1)}\, V_{i,(i+1)}\\
& = \pi_{(i,l)} \circ \jmath_{(i,l)\,((i+1)_N,l-1)} \, V_{i,(i+1)} \ ,  
%
\end{align*}
where we have used the relations  $((i+1)_N,N-1) >  ((i+1)_N,l-1)$ 
and $(i,N) > (i,l)$ for $N>l>1$. Hence 
\begin{equation}
\label{Db:4}
V_{i,(i+1)_N} \, \pi_{((i+1)_N,l-1)}   = \pi_{(i,l)} \circ \jmath_{(i,l)\,((i+1)_N,l-1)} \, V_{i,(i+1)} \ .
\end{equation}
This choice, in the case of $C_4$, corresponds to  the diagramme 
\[
\xymatrix{
\rho_4& \rho_1 & \rho_2 & \rho_3 & \rho_4 \\
 (4,4) & (1,4)& (2,4)&  (3,4) & (4,4)\\ 
 (4,3)\ar[u]_{\mathbbm{1}}  & (1,3)\ar[u]_{\mathbbm{1}} \ar[lu]_{V_{41}} & (2,3)\ar[u]_{\mathbbm{1}} \ar[lu]_{V_{12}}& (3,3) \ar[u]_{\mathbbm{1}}  \ar[lu]_{V_{23}}& (4,3)\ar[u]_{\mathbbm{1}} \ar[lu]_{V_{34}} \\
(4,2)\ar[u]_{\mathbbm{1}} & (1,2)\ar[u]_{\mathbbm{1}}\ar[lu]_{V_{41}} & (2,2)\ar[u]_{\mathbbm{1}}\ar[lu]_{V_{12}}& (3,2)\ar[u]_{\mathbbm{1}}\ar[lu]_{V_{23}} & (4,2)\ar[u]_{\mathbbm{1}}\ar[lu]_{V_{34}} \\ 
(4,1)\ar[u]_{\mathbbm{1}}& (1,1)\ar[u]_{\mathbbm{1}}\ar[lu]_{V_{41}} & (2,1)\ar[u]_{\mathbbm{1}}\ar[lu]_{V_{12}}& (3,1)\ar[u]_{\mathbbm{1}}\ar[lu]_{V_{23}} & (4,1)\ar[u]_{\mathbbm{1}}\ar[lu]_{V_{34}} 
}
\]
Finally, consider a generic inclusion $(i_k,l_k) > (i_1,l_1)$. This inclusion 
can be obtained by a composition  
\begin{equation}
\label{Db:5}
(i_1,l_1) \ < \ (i_2,l_2) \ < \ \cdots \ < (i_{k-1}, l_{k-1}) \ < \ (i_k,l_k) \ ,  
\end{equation} 
of the generators (\ref{Db:1}). 
Given such a composition we define the inclusion operator 
\begin{equation}
\label{Db:6}
V_{(i_k,l_k) (i_1,l_1)} := 
V_{(i_k,l_k)(i_{k-1},l_{k-1})}\,
V_{(i_{k-1},l_{k-1}) (i_{k-2},l_{k-2})}
\,\cdots\,  
V_{(i_2,l_2)(i_{1},l_{1})}   
\ , 
\end{equation}
where the inclusion operators for the generators of $C_N$ are defined according 
to the above prescriptions. However note that  for $l_k \geq l_1+2$ 
the inclusion $(i_k,l_k) > (i_1,l_1)$ can be obtained by different compositions 
of the generators (\ref{Db:1}). For instance, for $C_4$, the inclusion $(3,1) < (2,3)$ 
can be obtained either 
\[
(3,1) \ < \ (3,2) \ < \ (2,3) \ , 
\]
or 
\[
 (3,1) \ < \  (2,2) \ < \  (2,3) \ .
\]
However, the definition (\ref{Db:6}) does not depend on the chosen composition.  
Because  in any  such composition  the following ordered sequence  of transitions from a column 
to the preceding one ($mod$ $N$) must be present: 
\begin{equation}
\label{Db:7}
 i_1 \to (i_{1}-1)_N \to\cdots \to (i_{k}+1)_N \to i_{k}  \ ,
\end{equation}
and no other. So the difference between two compositions of inclusions leading to 
$(i_k,l_k) > (i_1,l_1)$ depends on how  inclusions preserving the column index 
are inserted between the elements of the sequence. However the inclusion
operator, associated to inclusions which preserve the column index, is the identity. 
Since the inclusion operators associated with inclusions of the form 
$(i,l)< ((i-1)_N,l+1)$ depend only on the column index, the definition 
(\ref{Db:6}) does not depend on the chosen path. 
Thus the pair $(\pi,V)$ is a faithful representation of $(\cA,\jmath)_{C_N}$ (see \ref{Cb:2}),
and in conclusion we have the following 
\begin{proposition}
\label{Db:8}
Any net of $\rC^*$-algebras over $C_N$ is injective. 
\end{proposition}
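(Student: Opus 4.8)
The plan is to deduce injectivity from the existence of a faithful representation: as recalled in \S\ref{C}, a net of $\rC^*$-algebras is injective if and only if it admits a faithful representation. Thus it suffices to exhibit a pair $(\pi,V)$ obeying the relations (\ref{Cb:2}) with every $\pi_{(i,l)}$ faithful. The whole construction is dictated by the geometry of $C_N$: each element $(i,l)$ lies below the unique maximal element $(i,N)$ of its column through the column-preserving chain $(i,l)<(i,l+1)<\cdots<(i,N)$, so I can generate every fibre representation in a column from a single representation of the top algebra $\cA_{(i,N)}$.

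Following Blackadar \cite{Bla}, I would fix a cardinal $\kappa$ exceeding the cardinality of every $\cA_{(i,N)}$ and let $\rho_i$ be the universal representation of $\cA_{(i,N)}$ amplified by $1_\kappa$, then set $\pi_{(i,l)}:=\rho_i\circ\jmath_{(i,N)(i,l)}$ as in (\ref{Db:2}). Faithfulness is immediate, each $\pi_{(i,l)}$ being a composition of faithful maps. The gain from the amplification is that $\pi_{(i,l)}$ is unitarily equivalent to the $\kappa$-fold amplification of the universal representation of $\cA_{(i,l)}$, a model that does not remember the column $i$ through which it was built.

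Next I would assign inclusion operators to the two generating inclusions (\ref{Db:1}). For a column-preserving step $(i,l)<(i,l+1)$ the two fibres are represented compatibly by construction, so the identity is the inclusion operator. For a diagonal step $(i,l)<((i-1)_N,l+1)$ both $\pi_{(i,l)}$ and $\pi_{((i-1)_N,l+1)}\circ\jmath_{((i-1)_N,l+1)(i,l)}$ are unitarily equivalent to the same $\kappa$-fold amplification of the universal representation of $\cA_{(i,l)}$; hence a unitary intertwiner exists, and the computation (\ref{Db:3})--(\ref{Db:4}) shows it can be chosen to depend only on the ordered pair of adjacent columns, not on the row $l$. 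Every remaining inclusion operator is then forced by the cocycle law and defined by composing generators along a chain as in (\ref{Db:6}).

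The crux of the argument, and the only real obstacle, is to show that (\ref{Db:6}) does not depend on the chosen chain, since this is precisely what guarantees the second relation in (\ref{Cb:2}). Here I would use the cylinder structure of $C_N$: any chain realizing $(i_1,l_1)<(i_k,l_k)$ splits into column-preserving steps, which contribute the identity, and column-shifting steps, which must occur in the single prescribed cyclic order (\ref{Db:7}) running from $i_1$ down to $i_k$ and no other. Since each shift operator depends only on the ordered pair of columns it connects, inserting or redistributing the identity steps among the shifts leaves the product unchanged, so $V_{(i_k,l_k)(i_1,l_1)}$ is well defined. Granting this, both the intertwining relation and the cocycle relation of (\ref{Cb:2}) hold, $(\pi,V)$ is a faithful representation of $(\cA,\jmath)_{C_N}$, and the proposition follows.
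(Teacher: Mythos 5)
Your proposal is correct and follows essentially the same route as the paper: the Blackadar amplification by a large cardinal $\kappa$, the definition $\pi_{(i,l)}=\rho_i\circ\jmath_{(i,N)(i,l)}$ along columns, identity operators for column-preserving inclusions, row-independent unitaries $V$ for the diagonal steps, and path-independence of (\ref{Db:6}) via the forced sequence of column transitions (\ref{Db:7}). You correctly identify the crux — that the intertwiner can be chosen to depend only on the pair of columns, established by the computation (\ref{Db:3})--(\ref{Db:4}) — so nothing essential is missing.
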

%
%
%
%
%
%

\subsection{Finite approximations of $S^1$, and injectivity} 
\label{Ea}
Following  the strategy outlined in \S \ref{Da} we prove that any net of $\rC^*$-algebras 
$(\cA,\jmath)_\cI$ over $S^1$ is injective.

\paragraph{The subsets $\cI_N$.} Let $\{x_n\}$ be a dense sequence of points in $S^1$. Define 
\begin{equation}
\label{Ea:2def}
 \cI_N:= \cup^N_{i=1} \cI_{x_i} \ , \qquad N\in\mathbb{N}  \ , 
\end{equation}
where $\cI_{x}:= \{ o\in \cI \ | \ x\not\in cl(o)\}$, for a point $x$ of $S^1$, and  
$cl(o)$ denotes the closure of the interval $o$.  Note that $\cI_{x}$ is, 
as a subposet of $\cI$, upward directed. 
For $N\geq 2$ the poset $\cI_N$ is a base of neighbourhoods for the topology of $S^1$, 
so its homotopy group is $\bZ$. \\
\indent The family $\{\cI_N\}$ satisfy
condition 1. outlined in \S \ref{Da}. To this end we first observe that note that $\cI_N\subset\cI_{N+1}$ for any $N\in\mathbb{N}$. So given a net of $\rC^*$-algebras 
$(\cA,\jmath)_{\cI}$ consider the restrictions $(\cA,\jmath)_{\cI_N}$ for any $N$, and 
for any inclusion $N\leq M$ define 
\[
\left\{
\begin{array}{ll}
\mathrm{i}^{M,N}(o):=o \ , &  o\in\cI_N \ ,  \\ 
\iota^{M,N}_o(A):=A \ ,    &  o\in \cI_N \ , \ \ A\in\cA_o \ ,
\end{array}
\right.
\]
giving unital monomorphisms 
$(\iota^{M,N},\mathrm{i}^{M,N}):(\cA,\jmath)_{\cI_N}\to(\cA,\jmath)_{\cI_{M}}$. 
\begin{lemma}
\label{Ea:2a}
Given a net  of $\rC^*$-algebras $(\cA,\jmath)_{\cI}$ over $S^1$, then 
\begin{itemize}
\item[(i)] $\big\{(\cA,\jmath)_{\cI_N}, (\iota^{M,N},\mathrm{i}^{M,N})\big\}_{\bN}$ is an 
inductive system of  nets of $\rC^*$-algebras whose
linking morphisms $(\iota^{M,N},\mathrm{i}^{M,N})$ are monomorphisms.
\item[(ii)] $(\cA,\jmath)_{\cI}$ is isomorphic to the inductive limit 
of $\big\{(\cA,\jmath)_{\cI_N}, (\iota^{M,N},\mathrm{i}^{M,N})\big\}_{\bN}$. 
\end{itemize}
\end{lemma}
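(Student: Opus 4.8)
The plan is to treat (i) as a routine verification against the definitions and to concentrate the real content in (ii), where the density of $\{x_n\}$ enters.

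For (i), I would first note that $\bN$ is directed and $\cI_N\subseteq\cI_{N+1}$, so $\{\cI_N\}$ is upward directed with $\cI_N,\cI_M\subseteq\cI_{\max(N,M)}$. Then I would check that each pair $(\iota^{M,N},\mathrm{i}^{M,N})$ is a morphism of nets. The base map $\mathrm{i}^{M,N}$ is the set-theoretic inclusion $\cI_N\hookrightarrow\cI_M$; since both posets carry the order inherited from $\cI$ (interval inclusion), $\mathrm{i}^{M,N}$ is in fact an order embedding. The fibre maps $\iota^{M,N}_o=\mathrm{id}_{\cA_o}$ are unital monomorphisms, and the intertwining relation $\jmath_{\tilde o o}\circ\iota^{M,N}_o=\iota^{M,N}_{\tilde o}\circ\jmath_{\tilde o o}$ holds trivially, the target inclusion maps being again the $\jmath$'s and the fibre maps identities. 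The composition law $(\iota^{M,L},\mathrm{i}^{M,L})\circ(\iota^{L,N},\mathrm{i}^{L,N})=(\iota^{M,N},\mathrm{i}^{M,N})$ for $N\le L\le M$, together with $(\iota^{N,N},\mathrm{i}^{N,N})=\mathrm{id}$, then follows at once, and each linking morphism is a monomorphism because its fibre maps are identities and its base map is injective.

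For (ii), I would invoke the inductive-limit construction of nets from Appendix \ref{Be}. Because all the linking data are inclusions and identities, the construction collapses: the colimit poset is the union $\cup_N\cI_N$ with the coherent order, and the fibre over each $o$ is the colimit of a constant system with identity connecting maps, hence $\cA_o$ itself, the inclusion maps being the $\jmath$'s. The one genuine step is to identify the colimit poset with $\cI$. As each $\mathrm{i}^{M,N}$ is an order embedding, the colimit order on $\cup_N\cI_N$ is $o\le\tilde o$ iff $o\subseteq\tilde o$, so it suffices to prove $\cup_N\cI_N=\cI$. This is where density is used: given $o\in\cI$, the closure $cl(o)$ is a proper closed subinterval of $S^1$, so $S^1\setminus cl(o)$ is nonempty and open, and by density some $x_i$ lies in it; then $x_i\notin cl(o)$ gives $o\in\cI_{x_i}\subseteq\cI_N$ for all $N\ge i$. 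I would also record the coherence remark that if $\tilde o\in\cI_M$ and $o\subseteq\tilde o$ then $o\in\cI_M$, since $cl(o)\subseteq cl(\tilde o)$, which is what makes the inherited orders agree. Finally, the inclusions $(\kappa^N):(\cA,\jmath)_{\cI_N}\to(\cA,\jmath)_{\cI}$ — base inclusion, identity on fibres — are compatible with the linking morphisms, so the universal property yields a unique morphism $\Phi$ from the inductive limit to $(\cA,\jmath)_{\cI}$; by the identifications just made, $\Phi$ is the identity on the base and on every fibre, hence an isomorphism.

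The main obstacle I anticipate is not any single hard estimate but rather matching the explicit ``everything is an identity'' description above against the formal inductive-limit construction of the appendix, so that the collapse of the fibres and the identification of the base are genuinely justified rather than merely plausible. The density argument for $\cup_N\cI_N=\cI$ is the only place a nontrivial topological fact is needed, and it is short.
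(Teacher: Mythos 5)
Your proposal is correct and follows essentially the same route as the paper: part (i) by direct verification from the definition of an inductive system, and part (ii) by showing $\cI=\cup_N\cI_N$ via the density of $\{x_n\}$ (the same argument that $S^1\setminus cl(o)$ must contain some $x_{N_o}$) and then invoking the universal property of inductive limits (Prop.~\ref{Bea:6}). You merely spell out in more detail what the paper leaves implicit, namely that the fibre colimits collapse to $\cA_o$ because all connecting maps are identities, and that the resulting comparison morphism is the identity on base and fibres.
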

\begin{proof}
$(i)$ easily follows from the definition  of inductive system  (\ref{Bea:0a}). $(ii)$ 
Once we have shown that 
$\cI$ is the inductive limit poset of $(\cI_N,\mathrm{i}^{M,N})_{\bN}$ (see \S \ref{Bea}), 
the proof follows from the definition of  
$(\cA,\jmath)_{\cI_N}$ and from the universal property of inductive limits Prop.\ref{Bea:6}, 
To this end it is enough to observe that since $\{x_n\}$ is, by assumption, dense in  
$S^1$,  for any $o\in\cI$ there exist $N_o\in\mathbb{N}$ such that 
$o\in\cI_N$ for any $N\geq N_o$. In fact, by density,  there exists $N_o$ such that $x_{N_o}\in S^1\setminus cl(o)$; hence  $o\in\cI_{N_o}$. 
\end{proof}
So what remains to be shown is that the nets $(\cA,\jmath)_{\cI_N}$ are injective for any $N$. 
We first introduce some suited finite approximations of $S^1$.

\paragraph{Finite approximations of $S^1$: the poset $P_N$.} 
Starting from the sequence $\{x_n\}$ introduced in the previous step, we construct for any $N\in\mathbb{N}$ a finite poset 
$P_N$ associated to the first $N$ elements $x_1,\ldots, x_N$ of the sequence. \smallskip 

The definition of the poset $P_N$ is notably simplified if we assume that the points $x_1 , \ldots, x_N$ 
are ordered as 
\[
 x_i<x_{i+1} \  , \qquad i=1,\ldots N-1 \ ,  
\]
under the clockwise orientation of $S^1$. This does not affect the generality of the proof 
of the injectivity of the net $(\cA,\jmath)_{\cI_N}$ since 
$\cI_N$ does not depend on the order of the points $x_1,\ldots, x_N$. \smallskip

Given the ordered $N$-ple $x_1,\ldots, x_N$,   the elements of the poset $P_N$ 
are the open intervals $(x_i,x_k)$, 
for $i,k=1,\ldots,N$, having, 
with respect to the clockwise orientation, $x_i$ as initial extreme and $x_k$ as 
final extreme respectively, ordered under  inclusion. 
In this way, each $(x_i,x_i)$ is the interval $S^1\setminus\{x_i\}$ and hence
is maximal; on the other hand, the intervals $(x_i,x_{i+1})$, $i \neq N$,
and $(x_N,x_1)$, have contiguous extreme points and hence are minimal.\\
\indent To handle the periodicity with respect to clockwise rotations
we use classes mod $N$, in the following way: for each $n \in \bN$ we denote
its class mod $N$ by $(n)_N \in 1 , \ldots , N$ (here we use the convention
of \S \ref{Db}), and write consequently $x_{(n)_N} \in S^1$. 
Moreover, we introduce the \emph{length function} assigning to each 
$(x_i,x_k) \in P_N$ the positive integer
\[
\ell_{i,k} := \sharp \{ \ j\in\{1,\ldots,N\} : (x_{(j)_N},x_{(j+1)_N}) \subseteq (x_i,x_k) \ \} \ ,
\]
where $\sharp$ stands for the cardinality. In this way each minimal element of $P_N$
has length $1$ and each maximal element has length $N$. Note that the length of
an interval can be easily calculated from the indices,
\[
\ell_{i,k}=(k-i)_N \ . 
\] 
Any interval $(x_i,x_k)$ of length $\ell_{i,k} < N$ has 
only two majorants among the intervals of length $\ell_{i,k}+1$, namely 
$(x_i, x_{(k+1)_N})$ and $(x_{(i-1)_N}, x_{k})$.
Finally, note that with our conventions the points of 
$\{ x_1 , \ldots , x_N \}$ belonging  to $(x_i,x_k)$ are
$x_{(i+1)_N}, x_{(i+2)_N} , \ldots , x_{(i+\ell_{ik}-1)_N}$.
\begin{lemma}
\label{Ea:6a}
The poset 
$P_N$ is isomorphic to the cylinder $C_N$.
\end{lemma}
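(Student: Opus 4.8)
The plan is to exhibit an explicit bijection $\phi : C_N \to P_N$ and to check that it carries the covering (immediate–majorant) relation of $C_N$ onto that of $P_N$; since in both posets the partial order is, by construction, the reflexive–transitive closure of its covers, this is enough. The covering rules dictate the right dictionary: in $C_N$ the two covers of $(i,l)$ are $(i,l+1)$ (same column) and $((i-1)_N,l+1)$ (column shifted back), while in $P_N$ the two covers of $(x_a,x_b)$ advance the final extreme to $x_{(b+1)_N}$ or retract the initial extreme to $x_{(a-1)_N}$. This suggests matching the column index $i$ with the \emph{initial} extreme and the row index $l$ with the \emph{length}. Concretely I would set
\[
\phi(i,l) := (x_i , x_{(i+l)_N}) \ , \qquad i,l \in \{1,\ldots,N\} \ ,
\]
whose length is $\ell_{i,(i+l)_N} = (l)_N = l$ by the formula $\ell_{i,k}=(k-i)_N$ recalled above.

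First I would verify that $\phi$ is a bijection, with inverse $(x_a,x_b)\mapsto(a,(b-a)_N)$. Since the points $x_1,\ldots,x_N$ are distinct, the initial extreme of $\phi(i,l)$ recovers $i$ and its length recovers $l$, giving injectivity; surjectivity holds because every $(x_a,x_b)\in P_N$ equals $\phi\big(a,(b-a)_N\big)$. As a consistency check, both posets have exactly $N^2$ elements.

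Next comes the heart of the matter, matching the generators. Using that each non-maximal $(i,l)$ has precisely the two covers $(i,l+1)$ and $((i-1)_N,l+1)$, while each interval of length $<N$ has precisely the two covers $(x_i,x_{(k+1)_N})$ and $(x_{(i-1)_N},x_k)$ (as stated just before the lemma), I would compute: $\phi(i,l+1)=(x_i,x_{(i+l+1)_N})$ is obtained from $\phi(i,l)=(x_i,x_{(i+l)_N})$ by advancing the final extreme, and $\phi((i-1)_N,l+1)=(x_{(i-1)_N},x_{(i+l)_N})$ is obtained from $\phi(i,l)$ by retracting the initial extreme. Hence $\phi$ sends the two covers of $(i,l)$ to the two covers of $\phi(i,l)$. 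Since $\phi$ is a bijection sending the maximal row $l=N$ onto the maximal intervals $\phi(i,N)=(x_i,x_i)=S^1\setminus\{x_i\}$, and each non-maximal element has exactly two covers on both sides, $\phi$ restricts to a bijection between the covering relations of $C_N$ and of $P_N$; in particular $\phi^{-1}$ preserves covers as well.

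Finally, the order of $C_N$ is the transitive closure of its covers by its inductive definition, and the order of $P_N$ (inclusion) is likewise the transitive closure of the length–incrementing covers, since any inclusion $I\subseteq J$ is realized by successively advancing the final extreme and retracting the initial extreme one step at a time. A bijection between finite posets preserving covers in both directions is therefore an order isomorphism, giving $P_N \simeq C_N$. I do not expect a genuine obstacle: the only real care required is the bookkeeping with $\mathrm{mod}\ N$ arithmetic under the convention $(0)_N=N$, and the small observation that the two length-$(\ell+1)$ majorants of an interval are genuine covers, which holds because no length lies strictly between $\ell$ and $\ell+1$.
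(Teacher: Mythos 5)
Your proof is correct and is essentially the paper's own argument: the paper uses the very same bijection, written in the opposite direction as $\rf(x_i,x_k) := (i,\ell_{ik})$ with inverse $\rf'(i,\ell) := (x_i,x_{(i+\ell)_N})$ (your $\phi$), and verifies order-preservation by exactly your computation on the two immediate majorants $(x_i,x_{(k+1)_N})$ and $(x_{(i-1)_N},x_k)$. The only difference is expository — you spell out why matching covers in both directions suffices (both orders being transitive closures of their covering relations), a point the paper leaves implicit.
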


\begin{proof}
We define a mapping $\rf:P_N\to C_N$ as follows: 
\[
\rf(x_i,x_k):= (i,\ell_{ik}) \ ,\qquad i,k\in\{1,2,\ldots N\} \ . 
\]
This map has inverse
\[
\rf'(i,\ell) := (x_i,x_{(i+\ell)_N})
\ \ , \ \
(i,\ell) \in C_N
\ ,
\]
and is thus bijective. To prove that $\rf$ is order preserving it suffices
to compute, for $\ell_{i,k}< N$
\[
\rf(x_i, x_{(k+1)_N}) = 
(i,\ell_{i,k}+1) >  
(i,\ell_{i,k}) = 
\rf(x_i, x_{(\ell_{i,k}+i)_N}) = 
\rf(x_i, x_{k})
\ ,
\]
and 
\[
\rf(x_{(i-1)_N}, x_{k}) = 
((i-1)_N,\ell_{i,k}+1) > 
(i,\ell_{i,k}) =  
\rf(x_i, x_{(\ell_{i,k}+i)_N}) =
\rf(x_i, x_k)
\ .
\]
So $\rf$ is an isomorphism and the proof follows. 
\end{proof}
\paragraph{Inducing nets over $P_N$.} We now show that the net $(\cA,\jmath)_\cI$ 
induces a net over $P_N$. In the next paragraph we shall see that the restrictions
$(\cA,\jmath)_{\cI_N}$ embed, faithfully on the fibres, into such a nets.\smallskip   

For any $i,k=1,2,\ldots, N$, let  
\begin{equation}
\label{Ea:3}
\cI^{(i,k)}_N:=\big\{ o\in\cI \ | \ cl(o) \subset (x_i,x_k) \big\} \ . 
\end{equation}
According to this definition 
$(x_i,x_k)\notin\cI^{(i,k)}_N$ and $\cI^{(i,k)}_N\subset \cI_N$. 
The poset $\cI^{(i,k)}_N$ is upward directed with respect to inclusion; 
so $\{\cA_o , \jmath_{ao}\}_{\cI^{(i,k)}_N}$ is an inductive system,
and  we define the inductive limit $\rC^*$-algebra
\begin{equation}
\label{Ea:3a}
\widehat \cA_{(i,k)}:= \underrightarrow{\lim} \{\cA_o , \jmath_{ao}\}_{\cI^{(i,k)}_N} \ . 
\end{equation}
The algebras $\widehat \cA_{(i,k)}$ are associated with the interval $(x_i,x_k)$ of $P_N$. 
So we have defined the fibres of a net over $P_N$. We now define the inclusion maps.  
Let $J^{(i,k)}_o:\cA_o\to\widehat\cA_{(i,k)}$ be the canonical embedding for $o\in \cI^{(i,k)}_N$. 
This is a unital monomorphism satisfying the relations
\begin{equation}
\label{Ea:4}
J^{(i,k)}_o\circ \jmath_{oa} =  J^{(i,k)}_a \ , \qquad a\leq o. 
\end{equation}
Note that if $(x_i,x_k)\subseteq (x_j,x_s)$ then $\cI^{(i,k)}_N\subseteq \cI^{(j,s)}_N$. 
We then define, for $a\in\cI^{(i,k)}_N$, 
\begin{equation}
\label{Ea:5} 
\widehat{\jmath}_{(j,s)(i,k)}(J^{(i,k)}_a(A)):= J^{(j,s)}_o \circ \jmath_{oa}(A) 
\ ,  \ \ A\in\cA_a \ , 
\end{equation}
where we take some $o\in\cI^{(j,s)}_N$ with $a\leq o$ since  $\cI^{(i,k)}_N$ is directed. 
We easily find, applying (\ref{Ea:4}), that our definition
does not depend on the choice of $o \geq a$. 
It turns out that $\widehat{\jmath}_{(j,s)(i,k)}$ extends to a unital monomorphism 
from $\widehat\cA_{(i,k)}$ into $\widehat\cA_{(j,s)}$; applying (\ref{Ea:5}), we immediately
find that $\widehat{\jmath}_{(j,s)(i,k)}$ fulfills the net relations
\[
\widehat{\jmath}_{(j,s) (i,k)}\circ \widehat \jmath_{(i,k)(m,r)} 
= 
\widehat\jmath_{(j,s)(m,r)} 
\ , \qquad 
(x_m,x_r)\subseteq (x_i,x_k) \subseteq (x_j,x_s)
\ ,
\]
therefore the system $(\widehat{\cA},\widehat\jmath)_{P_N}$ is a net of $\rC^*$-algebras.
\begin{lemma}
\label{Ea:6}
The net $(\widehat{\cA},\widehat\jmath)_{P_N}$ is injective for any $N$. 
\end{lemma}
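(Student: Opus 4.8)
The plan is to reduce the injectivity of $(\widehat{\cA},\widehat\jmath)_{P_N}$ to Proposition \ref{Db:8} by exploiting the poset isomorphism of Lemma \ref{Ea:6a}. Since injectivity is a property stated in terms of the canonical embedding into the enveloping net bundle, and the enveloping net bundle construction is functorial (as recalled just before Remark \ref{Bd:12}), it should transport along isomorphisms of the underlying poset. Concretely, Lemma \ref{Ea:6a} gives an isomorphism $\rf : P_N \to C_N$ of posets, and I would use it to transfer the net $(\widehat{\cA},\widehat\jmath)_{P_N}$ to a net over $C_N$.

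First I would observe that a poset isomorphism $\rf : P_N \to C_N$ allows us to regard $(\widehat{\cA},\widehat\jmath)_{P_N}$ as a net over $C_N$: set $\cB_{(i,\ell)} := \widehat{\cA}_{\rf'(i,\ell)}$ using the inverse $\rf'$ from the proof of Lemma \ref{Ea:6a}, and transport the inclusion maps accordingly. Because $\rf$ is order-preserving in both directions, the net relations for $\widehat\jmath$ carry over verbatim to net relations for the transported inclusion maps, so $(\cB,\imath)_{C_N}$ is a genuine net of $\rC^*$-algebras over $C_N$. This step is purely formal, essentially relabelling fibres and inclusion maps along a bijection that respects the order in both directions.

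Next I would invoke Proposition \ref{Db:8}, which asserts that \emph{any} net of $\rC^*$-algebras over $C_N$ is injective; in particular $(\cB,\imath)_{C_N}$ is injective, so it admits a faithful representation (as noted in \S \ref{C}, injectivity is equivalent to the existence of a faithful representation). Finally I would pull this faithful representation back along $\rf$ to obtain a faithful representation of $(\widehat{\cA},\widehat\jmath)_{P_N}$: if $(\pi,V)$ is a faithful representation of $(\cB,\imath)_{C_N}$, then setting $\widehat\pi_{(x_i,x_k)} := \pi_{\rf(x_i,x_k)}$ and transporting the inclusion operators through $\rf$ yields a representation whose fibre maps are faithful, and whose inclusion operators satisfy the cocycle relations \eqref{Cb:2} precisely because $\rf$ preserves the order structure. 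Since faithfulness on the fibres is exactly the condition characterizing a faithful representation, this establishes injectivity of $(\widehat{\cA},\widehat\jmath)_{P_N}$.

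The only genuine subtlety is making precise that injectivity is invariant under isomorphism of the base poset; the cleanest formulation is to note that an isomorphism $\rf$ of posets induces an isomorphism of the respective categories of nets, which (by functoriality of the enveloping net bundle) intertwines the canonical embeddings, so $\e$ is faithful for $(\widehat{\cA},\widehat\jmath)_{P_N}$ if and only if it is faithful for the transported net over $C_N$. I expect the main (though modest) obstacle to be the bookkeeping of indices in transporting the inclusion maps and inclusion operators through $\rf$ and $\rf'$; once this relabelling is carried out carefully, the result follows immediately from Proposition \ref{Db:8}.
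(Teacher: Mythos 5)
Your proof is correct and follows exactly the paper's own route: the paper's proof of Lemma \ref{Ea:6} is just the one-line observation that it follows from Proposition \ref{Db:8} and Lemma \ref{Ea:6a}. Your additional work spelling out how injectivity transports along the poset isomorphism $\rf : P_N \to C_N$ is precisely the (routine) detail the paper leaves implicit.
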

\begin{proof}
This follows by Proposition \ref{Db:8} and Lemma \ref{Ea:6a}. 
\end{proof}
\paragraph{Conclusion: the embedding of the nets $(\cA,\jmath)_{\cI_N}$}  We now are ready to prove that any net 
over $S^1$ is injective. 
\begin{theorem}
\label{Ea:11}
Any net of $\rC^*$-algebras $(\cA,\jmath)_{\cI}$ over $S^1$ is injective. 
In particular, the  net $(\cA,\jmath)_{\cI_N}$ embeds, faithfully on the fibres, into $(\widehat{\cA},\widehat\jmath)_{P_N}$ for any  $N$. 
\end{theorem}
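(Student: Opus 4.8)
The plan is to prove Theorem \ref{Ea:11} by combining the inductive-limit machinery of Lemma \ref{Ea:2a} with the approximation results already established for the finite posets $P_N$. The overall strategy is exactly the one laid out in \S\ref{Da}: injectivity of a net over $\cI$ follows from injectivity of the restrictions $(\cA,\jmath)_{\cI_N}$ together with Condition 1 (that $\cI$ is the inductive limit of the $\cI_N$), because injectivity is preserved under inductive limits (Theorem \ref{Beb:10}). Since Lemma \ref{Ea:2a} already gives us that $(\cA,\jmath)_\cI$ is the inductive limit of the system $\{(\cA,\jmath)_{\cI_N},(\iota^{M,N},\mathrm{i}^{M,N})\}_\bN$ with monomorphic linking maps, the entire problem reduces to proving the \emph{second} assertion of the theorem: that each $(\cA,\jmath)_{\cI_N}$ embeds, faithfully on the fibres, into the net $(\widehat\cA,\widehat\jmath)_{P_N}$.

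First I would construct the embedding morphism $(\Phi,\phi):(\cA,\jmath)_{\cI_N}\to(\widehat\cA,\widehat\jmath)_{P_N}$ explicitly. The poset map $\phi:\cI_N\to P_N$ should send an interval $o\in\cI_N$ to the smallest element $(x_i,x_k)\in P_N$ whose open interior contains $cl(o)$; concretely, $x_i$ and $x_k$ are the nearest points of $\{x_1,\dots,x_N\}$ flanking $cl(o)$ on either side (with respect to the clockwise orientation), so that $cl(o)\subset(x_i,x_k)$ and hence $o\in\cI^{(i,k)}_N$ in the notation of (\ref{Ea:3}). On the fibres I would define
\[
\Phi_o := J^{(\phi(o))}_o : \cA_o \to \widehat\cA_{\phi(o)} \ ,
\]
the canonical embedding of $\cA_o$ into the inductive-limit algebra $\widehat\cA_{(i,k)}$ from (\ref{Ea:3a}). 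Since each $J^{(i,k)}_o$ is a unital monomorphism, each $\Phi_o$ is faithful, giving faithfulness on the fibres immediately. Checking that $\phi$ is a poset morphism is routine: if $o\le\tilde o$ in $\cI_N$ then $cl(o)\subseteq cl(\tilde o)$, so the flanking interval of $\tilde o$ contains that of $o$, whence $\phi(o)\le\phi(\tilde o)$.

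The main verification—and the step I expect to require the most care—is the compatibility of $\Phi$ with the inclusion maps, namely the morphism relation
\[
\widehat\jmath_{\phi(\tilde o)\,\phi(o)}\circ\Phi_o = \Phi_{\tilde o}\circ\jmath_{\tilde o o}\ ,\qquad o\le\tilde o\ .
\]
Unwinding the definitions, the left-hand side applied to $A\in\cA_o$ is $\widehat\jmath_{\phi(\tilde o)\phi(o)}(J^{\phi(o)}_o(A))$, which by the defining formula (\ref{Ea:5}) equals $J^{\phi(\tilde o)}_a\circ\jmath_{a o}(A)$ for any suitable $a$; the right-hand side is $J^{\phi(\tilde o)}_{\tilde o}(\jmath_{\tilde o o}(A))$. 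The delicate point is that $\phi(o)$ and $\phi(\tilde o)$ may differ (the flanking interval genuinely grows), and one must confirm both that $\phi(o)\le\phi(\tilde o)$ holds so that $\widehat\jmath_{\phi(\tilde o)\phi(o)}$ is defined, and that the independence-of-$o$ property of (\ref{Ea:5}) lets us take $a=\tilde o$, after which the two sides collapse to the same element via the coherence relation (\ref{Ea:4}) for the canonical embeddings $J$. This is where the whole construction hinges on the directedness of the $\cI^{(i,k)}_N$ and the net relations for $\jmath$.

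Once the embedding $(\Phi,\phi)$ is established as faithful on the fibres, the conclusion is immediate: Lemma \ref{Ea:6} states that $(\widehat\cA,\widehat\jmath)_{P_N}$ is injective, so by Remark \ref{Bd:12}(1)—injectivity descends along a morphism that is faithful on the fibres into an injective net—each restriction $(\cA,\jmath)_{\cI_N}$ is injective. Finally, applying Remark \ref{Bd:12}(2) and Theorem \ref{Beb:10} to the inductive system of Lemma \ref{Ea:2a}, injectivity passes to the inductive limit, which by Lemma \ref{Ea:2a}(ii) is $(\cA,\jmath)_\cI$ itself. This completes the proof. I expect no obstacle in the final assembly; the sole nontrivial content is the explicit construction of $\phi$ and the bookkeeping verification of the morphism relation above.
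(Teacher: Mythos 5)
Your proposal is correct and takes essentially the same route as the paper: the paper's proof uses exactly your poset map (there called $\rf$, sending $o\in\cI_N$ to the flanking interval $(x_i,x_k)$ characterized by (\ref{Ea:9})), exactly your fibre maps $\eta_o = J^{\rf(o)}_o$, and verifies the morphism relation via (\ref{Ea:5}) and (\ref{Ea:4}) just as you describe. The final assembly is also identical, combining Lemma \ref{Ea:2a}, Lemma \ref{Ea:6}, Remark \ref{Bd:12} and Theorem \ref{Beb:10}.
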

\begin{proof}
Injectivity of $(\cA,\jmath)_{\cI}$ follows from Lemma \ref{Ea:2a} and 
Theorem \ref{Beb:10}, once we have shown that $(\cA,\jmath)_{\cI_N}$ is injective for any $N$. 
To this end, as observed in Remark \ref{Bd:12}, it will be  enough to show that $(\cA,\jmath)_{\cI_N}$ embeds, faithfully on the fibres, into 
$(\widehat{\cA},\widehat\jmath)_{P_N}$ because the latter is an injective net.\\   
\indent As a first step we prove that $P_N$ is a quotient of $\cI_N$.
Define  
\begin{equation}
\label{Ea:8}
 \rf(o):= (x_i,x_k) \ , \qquad o\in\cI_N  
\end{equation}
if
\begin{equation}
\label{Ea:9}
\left\{
\begin{array}{ll}
cl(o) \subset (x_i, x_{k}) \ {\mathrm{and}}
\\
x_{(i+1)_N},x_{(i+2)_N}, \ldots, x_{(i+\ell_{ik}-1)_N}  \in cl(o) \ , 
\end{array}
\right.
\end{equation}
where the points listed in the above equation are those  $x_1,x_2,\ldots, x_N$ 
that belong to $(x_i,x_k)$. 
It is clear that $\rf$ is order preserving and surjective, i.e. it is an epimorphism.
As a second step, we define a unital morphism faithful on the fibres  
from $(\cA,\jmath)_{\cI_N}$ into the  
injective net $(\widehat{\cA},\widehat\jmath)_{P_N}$, and this will suffice to
conclude the proof (by functoriality). Given $o\in\cI_N$, define 
\begin{equation}
\label{Ea:10}
\eta_o(A):= J^{\rf(o)}_o(A) \ , \qquad A\in\cA_o \ . 
\end{equation}
So $\eta_o:\cA_o\to\widehat\cA_{\rf(o)}$ is a unital monomorphism. 
Given $a\leq o$, by (\ref{Ea:10}) and (\ref{Ea:5}),  we have 
\[
\eta_o\circ \jmath_{oa} = J^{\rf(o)}_o\circ \jmath_{oa} =
\widehat\jmath_{\rf(o) \rf(a)}\circ J^{\rf(a)}_o \ ,
\] 
and can  conclude that $(\eta,\rf):(\cA,\jmath)_{\cI_N}\to(\widehat\cA,\widehat\jmath)_{P_N}$ 
is a unital morphism faithful on the fibres, completing the proof.
\end{proof}
Using this theorem we deduce the following 
result for $\mathrm{Diff}(S^1)$-covariant nets.
\begin{corollary}
\label{Ea:2}
Any $\mathrm{Diff}(S^1)$-covariant net of
$\rC^*$-algebras over $S^1$ has $H$-invariant states and strongly continuous $H$-covariant 
representations for any amenable subgroup $H$ of $\mathrm{Diff}(S^1)$. 
\end{corollary}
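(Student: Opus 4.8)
The plan is to combine Theorem~\ref{Ea:11} with the two propositions stated in the preliminaries, namely Proposition~\ref{Ca:6}(ii) for the existence of invariant states and Proposition~\ref{Cb:13} for the existence of covariant representations. Both of these require the net to be \emph{injective} (equivalently nondegenerate with faithful canonical embedding) and require the relevant fundamental group and symmetry group to be \emph{amenable}. So the work of the corollary is entirely about verifying hypotheses, not about a fresh construction.

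First I would fix an amenable subgroup $H \leq \mathrm{Diff}(S^1)$ and observe that the given net $(\cA,\jmath)_{\cI}$, being $\mathrm{Diff}(S^1)$-covariant, is in particular $H$-covariant by restricting the action $\alpha$ to $H$; the cocycle identities and the continuity condition are inherited verbatim since $H$ carries the subspace topology and acts on $\cI$ through the ambient action. Next I would record that $H$ is a \emph{continuous} symmetry group of $\cI$ in the sense of (\ref{Ab:1})--(\ref{Ab:2}): this was already noted for $\mathrm{Diff}(S^1)$ acting on the proper intervals of $S^1$ at the start of \S\ref{E}, and conditions (\ref{Ab:1})--(\ref{Ab:2}) are purely about the existence of suitable intervals $a,\tilde o \gg o$, which persist when the group is shrunk to a subgroup (one may still use the same interval witnesses, now only demanding $go \le a$ for $g$ in an $H$-neighbourhood of the identity). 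Then I would note that the fundamental group $\pi_1(\cI) \simeq \bZ$ is amenable, as recorded in \S\ref{E}.

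With these observations in place, Theorem~\ref{Ea:11} gives that $(\cA,\jmath)_{\cI}$ is injective. For the invariant states, injectivity implies nondegeneracy, so Proposition~\ref{Ca:6}(ii), applied with the amenable group $H$ and with $\pi_1(\cI)=\bZ$ amenable, yields $H$-invariant states. For the covariant representations, Proposition~\ref{Cb:13}, whose hypotheses are exactly that $\pi_1(K)$ be amenable, that $G=H$ be an amenable continuous symmetry group of $K=\cI$, and that the net be injective and $H$-covariant, yields strongly continuous $H$-covariant representations. Since $H$ was an arbitrary amenable subgroup, the corollary follows.

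The only genuine point requiring care — and the step I would expect to be the main (albeit minor) obstacle — is checking that $H$ is a continuous symmetry group of $\cI$ and that the $H$-covariance (in particular the continuity condition on $\alpha$ and on the inclusion operators) really does descend from the ambient $\mathrm{Diff}(S^1)$-structure. This is a routine verification because the topology, the order $\ll$, and the nets $\{g_\lambda\}\to e$ used in the continuity conditions are all defined intrinsically and behave well under passing to a topological subgroup; nonetheless it is the one place where the phrase ``amenable subgroup of $\mathrm{Diff}(S^1)$'' does more than simply invoke amenability, and I would make it explicit before citing the two propositions.
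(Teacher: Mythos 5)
Your proposal is correct and follows exactly the paper's own route: the paper's proof is the one-liner that injectivity (Theorem~\ref{Ea:11}) plus amenability of $\pi_1(\cI)\simeq\bZ$ allow one to invoke Prop.~\ref{Ca:6} and Prop.~\ref{Cb:13}. The extra verifications you flag (that the restricted $H$-action is still a continuous symmetry group and that covariance descends to the subgroup) are left implicit in the paper, and your treatment of them is a sound, if routine, supplement.
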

\begin{proof}
Since any net of $\rC^*$-algebras over $S^1$ is injective and 
the homotopy group of $S^1$ is amenable, the proof follows by 
Prop.\ref{Ca:6} and Prop.\ref{Cb:13}.
\end{proof}

We stress that, by Prop.\ref{Cb:13}, the covariant representations
of the previous corollary induce \emph{strongly continuous} $H$-representations 
in the sense of (\ref{Cb:12a}).


\section{Comments and outlook}
\label{Z}

We list some topics and questions arising from the present
paper. \smallskip

\noindent 1. A problem which deserves a further investigation is whether  any net over 
            $S^1$ admits faithful Hilbert space representations. This is a stronger 
            condition than injectivity and is related with 
            the existence of a proper ideal of the fibres of the enveloping net bundle \cite{RVL}.   
              
\noindent 2. An interesting question is whether it is possible to construct 
              representations of nets over $S^1$ directly from the representations of the       
              associated nets over $N$-cylinders \S \ref{Ea}. 
              This could be interesting because $N$-cylinders have 
              some symmetries that might be inherited by the induced 
              representations of the nets over $S^1$.\\[5pt]  
\noindent 3.  Having shown that any net over $S^1$ is injective, it would be  interesting  
              to understand the r\^ole of  representations, 
              carrying a nontrivial representation of the fundamental group of $S^1$,              
              in chiral conformal quantum field theories and, in particular,  
              to relate them to the  new superselection sectors introduced in  
              \cite{CKL}, since they carry the same topological content.\\[5pt] 
\noindent 4. It is hoped that the method used in the present paper to prove injectivity of 
             nets over $S^1$ can  be used for other manifolds. This point is the object of
             a work in progress.\\[5pt]


\appendix

\section{Inductive  limits}
\label{Be}

The constructions that can be made in the category 
of $\rC^*$-algebras can be also made in the category 
of nets of $\rC^*$-algebras. The direct sum of two nets, for instance,  
%
%
has fibres and inclusion maps given by, respectively, 
the direct sum of the fibres and of the inclusion maps 
of the two nets. On the same line the tensor product is defined. 
In this appendix, however, we focus on a single construction 
which will turns out to be very useful for analysing injectivity: 
the inductive limit. We show that the category of nets of $\rC^*$-algebras
has inductive limits. The functor assigning 
the enveloping net bundle turns out to preserve inductive limits, and
this implies that inductive limits preserve injectivity.


\subsection{Basic properties} 
\label{Bea}
A \emph{inductive system} of nets of $\rC^*$-algebras is given by the following data: 
an upward directed poset $\Lambda$ (we shall denote the elements of $\Lambda$ by Greek letters $\alpha, \beta$, 
etc..., and the order relation by $\preceq$); 
a family of nets of $\rC^*$-algebras 
$(\cA^\alpha,\jmath^\alpha)_{K^\alpha}$, with $\alpha\in \Lambda$,
and  a family  of unital morphisms
$(\psi^{\alpha\si}, \rf^{\alpha\si}): 
 (\cA^\si,\jmath^\si)_{K^\si}\to (\cA^\alpha,\jmath^\alpha)_{K^\alpha}$ 
for $\si\preceq \alpha$, with $\rf^{\alpha\si}$ injective, 
such that 
\begin{equation}
\label{Bea:0}
(\psi^{\alpha\si}, \rf^{\alpha\si})\circ (\psi^{\si\delta}, \rf^{\si\delta}) = 
(\psi^{\alpha\delta}, \rf^{\alpha\delta})\ ,  \qquad \delta\preceq \si\preceq \alpha \ .
\end{equation}
We shall denote such an inductive system by 
\begin{equation}
\label{Bea:0a}
\big\{ (\cA^\alpha,\jmath^\alpha)_{K^\alpha}, \, (\psi^{\si\alpha}, \rf^{\si\alpha}) \big\}_\Lambda \ , 
\end{equation}
and call $(\psi^{\si\alpha}, \rf^{\si\alpha})$ the \emph{linking morphisms}  of the system. \bigskip

We stress that \emph{do not assume} in the definition of inductive system 
that the linking morphisms are monomorphisms, but  
we only require that $\rf^{\alpha\si}$ is a monomorphism of posets.  \smallskip 

Note that by the definition of morphisms of nets, for any $\alpha\preceq \si$,  
\begin{equation}
\label{Bea:0b}
\psi^{\si\alpha}_a\circ \jmath^\alpha_{ae}=   
\jmath^\si_{\rf^{\si\alpha}(a) \rf^{\si\alpha}(e)}\circ \psi^{\si\alpha}_e \ , \qquad e\leq^\alpha a  \ ,
\end{equation}
where $\leq^\alpha$ is the order relation of $K^\alpha$.  
Moreover, it is worth observing that an inductive system of posets $\{ K^\alpha \ , \ \rf^{\si\alpha}\}_\Lambda$ 
is associated with our inductive system of nets. In the following we show that any 
inductive system of nets of $\rC^*$-algebras has an inductive limit, which turns out to be 
a net of $\rC^*$-algebras over the inductive limit poset. \bigskip

First of all we explain the inductive limit poset of $\{K^\alpha , \rf^{\alpha\si}\}_\Lambda$. 
This is a poset $K$, with order relation denoted by $\leq$,  
such that for any $\alpha\in \Lambda$ there is a  monomorphism 
$F^\alpha: K^\alpha\to K$ satisfying the following properties: 
\begin{enumerate}
\item[(1)]  $\rF^\alpha\circ \rf^{\alpha\beta}=\rF^{\beta}$ for any $\beta\preceq \alpha$ ; 
\item[(2)]  $K=\cup_{\alpha\in \Lambda} \rF^\alpha(K^\alpha)$ ; 
\item[(3)]  for any other poset $K'$  with a family of morphisms $\rH_\alpha:K^\alpha\to K'$
            such that $\rH^\alpha\circ \rf^{\alpha\beta}=\rH^{\beta}$ there exists a unique morphism 
            $\rH:K\to K'$ such that $\rH\circ \rF^\alpha = \rH^\alpha$ for any $\alpha\in \Lambda$. 
\end{enumerate}
Existence of the inductive limit poset can be proved as a consequence of the more general construction 
of colimits in the setting of small categories (see \cite[\S III.3]{ML}).\\
\indent Now, given $o\in K$, we consider
\[
\Lambda_o 
\ := \
\{ \alpha \in \Lambda \ : \ o \in \rF^\alpha(K^\alpha) \}
\ \subseteq \Lambda
\ .
\]
Clearly $\Lambda_o$ is not empty because of property (2) of the inductive limit poset. 
\begin{lemma}
\label{Bea:0c}
If $K$ is the inductive limit poset of $(K^\alpha,\rf^{\si\alpha})_\Lambda$,
then the following properties hold:
(i) $\Lambda_o$ is an upper set
\footnote{An upper set $P$ of $\Lambda$ is a subset such that if $\si\in \Lambda$ and 
there is $\alpha\in P$ such that $\si\preceq \alpha$, then $\si\in P$.} 
of $\Lambda$ for any $o\in K$ and, as a subposet of $\Lambda$, it is upward directed.  
(ii) For any  $o\in K$ there is a map $\Lambda_o\ni \alpha \mapsto o_\alpha\in K^\alpha$
such that
\begin{equation}
\label{Bea:0d}
\begin{array}{ll}
 \rf^{\si\alpha}(o_\alpha) = o_\si \ , & \alpha\preceq \si \ ; \\
 \rF^\alpha(o_\alpha) = o \ , &   \alpha\in \Lambda_o \ . 
\end{array}
\end{equation}
\end{lemma}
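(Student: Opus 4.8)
The plan is to prove both parts by unwinding the universal property (3) of the inductive limit poset, applied to carefully chosen test posets $K'$. For part (i), to show $\Lambda_o$ is an upper set, I would start from $\alpha \in \Lambda_o$ and $\alpha \preceq \si$; then $o = \rF^\alpha(a)$ for some $a \in K^\alpha$, and using property (1) in the form $\rF^\si \circ \rf^{\si\alpha} = \rF^\alpha$ I get $o = \rF^\si(\rf^{\si\alpha}(a))$, so $o \in \rF^\si(K^\si)$ and hence $\si \in \Lambda_o$. For upward directedness, given $\alpha, \beta \in \Lambda_o$, I first use that $\Lambda$ itself is directed to pick some $\ga$ with $\alpha, \beta \preceq \ga$; then since $\Lambda_o$ is an upper set, $\ga \in \Lambda_o$, which is the common upper bound inside $\Lambda_o$. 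This is the clean payoff of having established the upper-set property first.

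For part (ii), I would define $o_\alpha$ for $\alpha \in \Lambda_o$ by choosing any preimage: since $o \in \rF^\alpha(K^\alpha)$ there exists $o_\alpha \in K^\alpha$ with $\rF^\alpha(o_\alpha) = o$, which is the second equation of (\ref{Bea:0d}). The content is really in the first equation, the coherence $\rf^{\si\alpha}(o_\alpha) = o_\si$ for $\alpha \preceq \si$, and here I expect the main obstacle to lie, because a priori the preimages $o_\alpha$ are only defined up to whatever fibers $\rF^\alpha$ has. The key leverage is that each $\rF^\alpha$ is a \emph{monomorphism} of posets, so I would argue as follows: applying $\rF^\si$ to both sides of the desired identity and using property (1), $\rF^\si(\rf^{\si\alpha}(o_\alpha)) = \rF^\alpha(o_\alpha) = o = \rF^\si(o_\si)$; since $\rF^\si$ is injective this forces $\rf^{\si\alpha}(o_\alpha) = o_\si$. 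Thus once the preimages $o_\alpha$ are chosen to satisfy $\rF^\alpha(o_\alpha) = o$, the compatibility relation is automatic from injectivity of the $\rF^\si$.

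The one subtlety I would flag is well-definedness: the $o_\alpha$ are chosen individually for each $\alpha \in \Lambda_o$, and I should check this choice is actually consistent, i.e. that I am not over-determining the family. But the injectivity argument above shows the choice is in fact \emph{forced}: for each $\alpha$, the element $o_\alpha$ with $\rF^\alpha(o_\alpha) = o$ is unique precisely because $\rF^\alpha$ is a monomorphism, so there is no freedom and no consistency to worry about. Hence the map $\alpha \mapsto o_\alpha$ is genuinely a well-defined function on $\Lambda_o$ satisfying both equations of (\ref{Bea:0d}). I would close by noting that injectivity of each $\rF^\alpha$ is exactly the hypothesis that the inductive-limit-poset construction supplies (the $\rF^\alpha$ are stated to be monomorphisms), so the argument uses only data already available. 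The whole proof is short and structural; the only place requiring care is recognizing that injectivity of $\rF^\alpha$, rather than any directedness argument, is what pins down the compatibility in (ii).
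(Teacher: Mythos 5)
Your proposal is correct and follows essentially the same route as the paper's proof: injectivity of the $\rF^\alpha$ pins down the unique preimage $o_\alpha$, property (1) of the limit poset gives the upper-set property via $\rF^\si(\rf^{\si\alpha}(o_\alpha)) = \rF^\alpha(o_\alpha) = o$, directedness of $\Lambda_o$ follows from directedness of $\Lambda$ plus the upper-set property, and the coherence relation $\rf^{\si\alpha}(o_\alpha) = o_\si$ is forced by injectivity of $\rF^\si$ exactly as you argue. The only (immaterial) difference is ordering: the paper defines $o_\alpha$ as the unique preimage at the outset and observes that (ii) drops out of the proof of (i), whereas you establish (i) first and then treat (ii) separately.
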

\begin{proof}
$(i)$ Given $\alpha\in \Lambda_o$ the element $a\in K^\alpha$ satisfying 
$\rF^\alpha(a)=o$ is unique because  $\rF^\alpha$ is injective,
so we denote this element by $o_\alpha$. Now, given $\alpha\in \Lambda_o$,  
and  $\beta\in S$ with $\alpha\preceq \beta$ then $\beta\in \Lambda_o$, since 
$\rF^{\beta}(\rf^{\beta\alpha}(o_\alpha)) = \rF^{\alpha}(o_\alpha) = o$, 
so  $\Lambda_o$ is an upper set of $\Lambda$. Thus $\Lambda_o$ is upward directed because
$\Lambda$ is. $(ii)$ has been proved in the course of proving  $(i)$.
\end{proof}

We now construct the inductive limit of 
$\big\{ (\cA^\alpha,\jmath^\alpha)_{K^\alpha}, \, 
 (\psi^{\si\alpha}, \rf^{\si\alpha}) \big\}_\Lambda$. 
Given $o\in K$, using the properties of the function $\Lambda_o\ni\alpha\mapsto o^\alpha$
and applying (\ref{Bea:0d}), (\ref{Bea:0b}), we see that 
$\{\cA^{\alpha}_{o_\alpha},\psi^{\si\alpha}_{o_{\alpha}}\}_{\Lambda_o}$ 
is an inductive system of $\rC^*$-algebras over $\Lambda_o$. 
So we can define the $\rC^*$-inductive limit
\begin{equation}
\label{Bea:1}
\cA_o:=\underrightarrow{\lim} \cA^{\alpha}_{o_\alpha}
\ .
\end{equation}
Now, it is worth recalling 
some facts about inductive limits (see \cite{Bla1}). First of all,  
there is a unital morphism 
$\psi^{\alpha}_o:\cA^\alpha_{o_\alpha}  \to \cA_o$  
such that 
\begin{equation}
\label{Bea:2}
\psi^{\si}_o \circ \psi^{\si\alpha}_{o_{\alpha}} = \psi^{\alpha}_o  \ , \qquad \alpha\preceq \si \ .  
\end{equation}
Furthermore, the unital ${^*}$-algebra ${\cA}'_o$ defined by  
\begin{equation}
\label{Bea:2a}
\cA'_o:= \bigcup_{\alpha\in \Lambda_o} \psi^{\alpha}_o(\cA^\alpha_{o_{\alpha}})
\end{equation}
is a dense ${^*}$-subalgebra of $\cA_o$ for any $o\in K$. 
Finally, the norm $\|\cdot \|_{\infty}$ of the inductive limit
satisfies, for any $A\in\cA^\alpha_a$,  the relation
\begin{equation}
\label{Bea:3}
\| \psi^{\alpha}_o(A)\|_{\infty} = \inf_{\si\succeq \alpha} \|\psi^{\si\alpha}_{o_\alpha}(A)\| = 
\lim_{\si \succeq \alpha} \|\psi^{\si\alpha}_{o_{\alpha}}(A)\| \ ,   
\end{equation}
because  the norm  $\|\psi^{\si\alpha}_{o_\alpha}(A)\|$ is monotone decreasing 
in $\si$. \smallskip  
 
The correspondence $\cA: K\ni o\to \cA_o$, where $\cA_o$ is the unital $\rC^*$-algebra 
defined by equation (\ref{Bea:1}),  is  the fibre  of the inductive limit net over $K$. 
What is yet missing are 
the inclusion  maps. Given $o,\tilde o \in K$ with $o \leq \tilde o $, we first define 
the inclusion map $\jmath_{\tilde oo}$ on  the ${^*}$-algebra $\cA'_o$. 
Afterwards we shall prove that these maps can be isometrically extended to  all of 
$\cA_o$. Given $o, \tilde o$ as above, take $\alpha\in \Lambda_o$ and choose 
$\si\in \Lambda_{\tilde o}$ such that  $\si\succeq \alpha$, and, for any $A\in\cA^\alpha_a$,  define  
\begin{equation} 
\label{Bea:4}
\jmath_{\tilde o o}(\psi^{\alpha}_o(A)) : = 
\psi^{\si}_{\tilde o} \circ \jmath^{\si}_{{\tilde o}_\si o_{\sigma}} \circ 
\psi^{\si\alpha}_{o_\alpha} (A)  \ . 
\end{equation}
This definition does not depend on the choice 
of  $\si\in \Lambda_{\tilde o}$ with $\si\succeq \alpha$. In fact, take $\gamma\in \Lambda_{\tilde o}$ 
with $\gamma\succeq \si$. 
Relations (\ref{Bea:2}), (\ref{Bea:0b}),(\ref{Bea:0d}), and (\ref{Bea:0}) give
\begin{align*}
\psi^{\si}_{\tilde o} \circ \jmath^{\si}_{{\tilde o}_\sigma o_\sigma}\circ 
\psi^{\si\alpha}_{o_\alpha} & =   \psi^{\gamma}_{\tilde o}\circ \psi^{\gamma\sigma}_{{\tilde o}_\si} \circ 
\jmath^{\sigma}_{{\tilde o}_\si o_\si}\circ \psi^{\sigma\alpha}_{o_\alpha} \\
& = 
\psi^{\gamma}_{\tilde o}\circ \jmath^{\gamma}_{\rf^{\gamma\sigma}({\tilde o}_\sigma) \rf^{\gamma\sigma}(o_\sigma)}\circ 
\psi^{\gamma\sigma}_{o_\sigma} \circ 
\psi^{\sigma\alpha}_{o_\alpha} \\ 
& = 
\psi^{\gamma}_{\tilde o}\circ \jmath^{\gamma}_{{\tilde o}_\gamma o_\gamma}\circ 
\psi^{\gamma\alpha}_{o_\alpha}  \ .  
\end{align*}
By  the definition of $\cA'_o$ 
$\jmath_{\tilde o o}$ maps from $\cA'_o$ into 
$\cA'_{\tilde o}$ since $\Lambda_o$ is upward directed.  
We now prove that these maps satisfy all the properties 
of inclusion maps. 
\begin{lemma}
\label{Bea:5}
Given an inductive system 
$\big\{ (\cA^\alpha,\jmath^\alpha)_{K^\alpha}, \, (\psi^{\si\alpha}, \rf^{\si\alpha}) \big\}_\Lambda$
of nets of $\rC^*$-algebras 
the following assertions hold:
\begin{itemize}
\item[(i)] The triple $(\cA,\jmath)_K$ is a net of $\rC^*$-algebras. 
\item[(ii)] If the system is composed of $\rC^*$-net bundles, then $(\cA,\jmath)_K$ is a $\rC^*$-net bundle.
\end{itemize}
\end{lemma}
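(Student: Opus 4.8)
The plan is to verify the two assertions directly from the construction of the inductive limit net carried out in equations (\ref{Bea:1})--(\ref{Bea:4}). For part (i), I must check that $(\cA,\jmath)_K$ satisfies the defining axioms of a net of $\rC^*$-algebras: each $\cA_o$ is a unital $\rC^*$-algebra, each $\jmath_{\tilde o o}$ is a unital monomorphism, and the net relations $\jmath_{o'\tilde o}\circ\jmath_{\tilde o o}=\jmath_{o'o}$ hold for $o\leq\tilde o\leq o'$. The first point is already settled: $\cA_o$ is by definition a $\rC^*$-inductive limit, hence a unital $\rC^*$-algebra. The map $\jmath_{\tilde o o}$ has so far only been defined on the dense $^*$-subalgebra $\cA'_o$ by equation (\ref{Bea:4}); the crucial analytic step is to show it is isometric there, so that it extends to a (necessarily injective, unital, $^*$-preserving) morphism on all of $\cA_o$.

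First I would establish the isometry property. Fix $A\in\cA^\alpha_{o_\alpha}$ with $\alpha\in\Lambda_o$. Using (\ref{Bea:3}) applied in the fibre over $\tilde o$, together with (\ref{Bea:4}), I would compute
\[
\|\jmath_{\tilde o o}(\psi^\alpha_o(A))\|_\infty
= \lim_{\gamma\succeq\si}\|\psi^{\gamma\si}_{\tilde o_\si}\circ\jmath^\si_{\tilde o_\si o_\si}\circ\psi^{\si\alpha}_{o_\alpha}(A)\|,
\]
and then commute the net inclusion map $\jmath^\gamma$ past $\psi^{\gamma\si}$ via the morphism relation (\ref{Bea:0b}). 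Since each $\jmath^\gamma_{\tilde o_\gamma o_\gamma}$ is an isometric monomorphism (being an inclusion map of the net $(\cA^\gamma,\jmath^\gamma)$), the norm $\|\psi^{\gamma\si}\circ\jmath^\si\circ\psi^{\si\alpha}(A)\|=\|\jmath^\gamma_{\tilde o_\gamma o_\gamma}\circ\psi^{\gamma\alpha}_{o_\alpha}(A)\|=\|\psi^{\gamma\alpha}_{o_\alpha}(A)\|$ equals $\|\psi^\alpha_o(A)\|_\infty$ by (\ref{Bea:3}) again in the fibre over $o$. This shows $\jmath_{\tilde o o}$ is isometric on $\cA'_o$, so it extends uniquely to an isometric unital monomorphism $\cA_o\to\cA_{\tilde o}$. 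I expect this norm-commutation argument to be the main obstacle, since it requires carefully matching the cofinal limits defining the two fibre norms and exploiting that the linking maps of the $\rC^*$-inductive limits interleave correctly with the net inclusions.

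The net relations then follow by a density argument: on generators $\psi^\alpha_o(A)$ both $\jmath_{o'\tilde o}\circ\jmath_{\tilde o o}$ and $\jmath_{o'o}$ reduce, via (\ref{Bea:4}) and the net relations of each $(\cA^\si,\jmath^\si)$ together with (\ref{Bea:2}), to the same expression, and since $\cA'_o$ is dense and both sides are bounded they agree everywhere. This proves (i). For part (ii), I would observe that when every net $(\cA^\si,\jmath^\si)_{K^\si}$ is a $\rC^*$-net bundle, each $\jmath^\gamma_{\tilde o_\gamma o_\gamma}$ is an isomorphism, not merely an isometric embedding. To conclude that $\jmath_{\tilde o o}$ is surjective, hence an isomorphism, I would show that its image contains every generator $\psi^\si_{\tilde o}(B)$ of the dense subalgebra $\cA'_{\tilde o}$: given such a $B\in\cA^\si_{\tilde o_\si}$, the element $A:=(\jmath^\si_{\tilde o_\si o_\si})^{-1}(B)\in\cA^\si_{o_\si}$ satisfies $\jmath_{\tilde o o}(\psi^\si_o(A))=\psi^\si_{\tilde o}(B)$ by (\ref{Bea:4}). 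Thus $\jmath_{\tilde o o}$ has dense range and, being isometric, has closed range, so it is onto; therefore $(\cA,\jmath)_K$ is a $\rC^*$-net bundle, completing the proof.
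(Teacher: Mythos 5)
Your proposal is correct and follows essentially the same route as the paper: the isometry of $\jmath_{\tilde o o}$ on the dense subalgebra $\cA'_o$ via the norm formula (\ref{Bea:3}), the intertwining relation (\ref{Bea:0b}) and the compatibility (\ref{Bea:0}), followed by isometric extension, and, for part (ii), the observation that surjectivity of the fibre inclusion maps $\jmath^\alpha_{\tilde o_\alpha o_\alpha}$ forces $\jmath_{\tilde o o}(\cA'_o)=\cA'_{\tilde o}$ (the paper phrases this through the relation $\jmath_{\tilde o o}\circ\psi^\alpha_o=\psi^\alpha_{\tilde o}\circ\jmath^\alpha_{\tilde o_\alpha o_\alpha}$, you phrase it by pulling back generators, which is the same computation). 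Your attention to the cofinality matching in the two infima is exactly the point the paper uses implicitly via monotonicity of the norms.
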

\begin{proof}
$(i)$ By definition $\jmath_{\tilde oo}: \cA'_{\tilde o}\to \cA'_{o}$ 
is a unital morphism such that 
\[
\jmath_{\hat o \tilde o}\circ \jmath_{\tilde oo} = \jmath_{\hat oo}  \ , \qquad o\leq \tilde o \leq \hat o \ . 
\]
We now prove that $\jmath_{\tilde oo}$ are isometries. To this end, given   
$\alpha\in \Lambda_o$, 
take $\si\in \Lambda_{\tilde o}$ with $\si\succeq \alpha$. By (\ref{Bea:3}), (\ref{Bea:0b}) and (\ref{Bea:0})
we have 
\begin{align*}
\| \jmath_{\tilde o o} \circ \psi^{\alpha}_o(A) \|_{\infty} & = 
\inf_{\gamma\succeq\si} 
\| \psi^{\gamma\si}_{{\tilde o}_\si} \circ
   \jmath^{\si}_{{\tilde o}_\si o_\si} \circ 
   \psi^{\si\alpha}_{o_\alpha} (A) \|  = 
\inf_{\gamma\succeq \si} 
\| \jmath^{\gamma}_{{\tilde o}_\gamma o_\gamma} \circ 
   \psi^{\gamma\sigma}_{o_\sigma} \circ
   \psi^{\sigma\alpha}_{o_\alpha} (A) \| \\ & = 
\inf_{\gamma\succeq \si} 
\| \jmath^{\gamma}_{{\tilde o}_\gamma o_\gamma} \circ 
   \psi^{\gamma\alpha}_{o_\alpha}(A) \|  = 
\inf_{\gamma\succeq \si} 
\| \psi^{\gamma\alpha}_{o_\alpha}(A) \| \\ & = 
\|\psi^{\alpha}_o(A) \|_{\infty} \ , 
\end{align*}
since the $\jmath^{\gamma}_{{\tilde o}_\gamma o_\gamma}$ are isometries.  
So $\jmath_{\tilde oo}$ admits an isometric extension to  a mapping
from $\cA_o$ into $\cA_{\tilde o}$. \smallskip 

\noindent
$(ii)$ We study the image of $\jmath_{\tilde oo}$. First of all we show that the following relations
hold for every $\alpha \in \Lambda_o$,
\begin{equation}
\label{Bea:5a}
\jmath_{\tilde o o} \circ \psi^{\alpha}_o(A) = 
\psi^{\alpha}_{\tilde o} \circ \jmath^{\alpha}_{{\tilde o}_\alpha o_{\alpha}}(A) 
\ , \qquad 
A \in \cA^\alpha_{o_\alpha} \ ;
\end{equation} 
in fact, given $\si\in \Lambda_{\tilde o}$ with $\si\succeq \alpha$, by \ref{Bea:4} and (\ref{Bea:2}) we have
\begin{align*}
\jmath_{\tilde o o} \circ \psi^{\alpha}_o(A)  & = 
\psi^{\si}_{\tilde o} \circ \jmath^{\si}_{{\tilde o}_\si o_{\sigma}} \circ
\psi^{\si\alpha}_{o_\alpha} (A) \\ 
&  = 
\psi^{\si}_{\tilde o} \circ 
\psi^{\si\alpha}_{\tilde o_\alpha} \circ 
\jmath^{\alpha}_{{\tilde o}_\alpha\, o_{\alpha}}(A) = 
\psi^{\alpha}_{\tilde o} \circ \jmath^{\alpha}_{{\tilde o}_\alpha\, o_{\alpha}}(A) \ . 
\end{align*}
Since every
$\jmath^{\alpha}_{{\tilde o}_\alpha o_{\alpha}} : 
 \cA^\alpha_{o_\alpha} \to 
 \cA^\alpha_{\tilde o_\alpha}$
is an isomorphism, the previous relation gives  
$\jmath_{\tilde o o} \circ \psi^{\alpha}_o (\cA^\alpha_{o_\alpha}) = 
\psi^{\alpha}_{\tilde o} (\cA^\alpha_{\tilde o_\alpha})$.
So, $\jmath_{\tilde oo}(\cA'_o) = \cA'_{\tilde o}$, and this, in turns, implies that 
$\jmath_{\tilde oo}$ extends to an isomorphism from $\cA_o$ to $\cA_{\tilde o}$, completing the proof.  
\end{proof}

Finally, we have the following result. 
\begin{proposition}
\label{Bea:6}
Let $\big\{(\cA^\alpha,\jmath^\alpha)_{K^\alpha}, 
\, (\psi^{\si\alpha}, \rf^{\si\alpha})\big\}_\Lambda$ be  an inductive system of nets of $\rC^*$-algebras. 
Then for each $\alpha\in \Lambda$ there is a unital morphism
\[
(\Psi^{\alpha}, \rF^\alpha) : (\cA^\alpha,\jmath^\alpha)_{K^\alpha} \to (\cA,\jmath)_K
\]
such that 
\begin{itemize}
\item[(i)] $(\Psi^{\si}, \rF^\si)\circ (\psi^{\si\alpha}, \rf^{\si\alpha}) = 
            (\Psi^{\alpha}, \rF^\alpha)$ for any $\alpha\preceq \si$ ;
\item[(ii)] The image $\Psi^{\alpha}_{o_\alpha} (\cA^\alpha_{o_\alpha})$, as $\alpha$ varies in $\Lambda_o$,
is dense in  $\cA_o$ for any $o\in K$; 
\item[(iii)] If there is a net of $\rC^*$-algebras $(\cB,y)_K$ and a collection of morphisms
 $(\Phi^\alpha,\rF^\alpha) : (\cA^\alpha,\jmath^\alpha)_{K^\alpha} \to (\cB,y)_K$, $\alpha\in \Lambda$,  
 such that 
 $(\Phi^\si,\rF^\si)\circ (\psi^{\si\alpha}, \rf^{\si\alpha}) = (\Phi^\alpha,\rF^\alpha)$, 
 $\alpha \preceq \sigma$,
 then there exists a unique morphism $\Phi : (\cA,\jmath)_K \to (\cB,y)_K$ such that 
 $\Phi \circ (\Psi^\alpha,\rF^\alpha) = (\Phi^\alpha,\rF^\alpha)$ for any $\alpha\in \Lambda$. 
\end{itemize}
\end{proposition}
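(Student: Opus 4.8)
The plan is to construct the morphisms $(\Psi^\alpha,\rF^\alpha)$ directly and then verify the three properties in the order (i), (ii), (iii), since (iii) is the universal property and will be the main obstacle. The poset component $\rF^\alpha : K^\alpha \to K$ is already supplied by the inductive limit poset construction, so only the algebra component $\Psi^\alpha := \{ \Psi^\alpha_{o_\alpha} : \cA^\alpha_{o_\alpha} \to \cA_{\rF^\alpha(o_\alpha)} \}$ needs to be defined. The natural choice is to set $\Psi^\alpha_{o_\alpha} := \psi^\alpha_o$ where $o = \rF^\alpha(o_\alpha)$, reusing the canonical maps $\psi^\alpha_o : \cA^\alpha_{o_\alpha} \to \cA_o$ into the fibrewise $\rC^*$-inductive limit $(\ref{Bea:1})$. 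I would first check that this family really is a morphism of nets, i.e. that it intertwines the inclusion maps: for $e \leq^\alpha a$ in $K^\alpha$ one must show $\jmath_{\rF^\alpha(a)\,\rF^\alpha(e)} \circ \psi^\alpha_a = \psi^\alpha_e \circ \jmath^\alpha_{ae}$, which follows by unwinding the definition $(\ref{Bea:4})$ of the limit inclusion maps together with the fibrewise compatibility $(\ref{Bea:2})$.

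Property (i) should be the easiest: it reduces to the fibrewise identity $\psi^\si_o \circ \psi^{\si\alpha}_{o_\alpha} = \psi^\alpha_o$, which is precisely $(\ref{Bea:2})$, combined with the poset-level relation $\rF^\si \circ \rf^{\si\alpha} = \rF^\alpha$ from property (1) of the inductive limit poset. Property (ii) is essentially the definition of $\cA_o$: by $(\ref{Bea:2a})$ the $\ast$-subalgebra $\cA'_o = \bigcup_{\alpha\in\Lambda_o} \psi^\alpha_o(\cA^\alpha_{o_\alpha})$ is dense in $\cA_o$, and since $\Psi^\alpha_{o_\alpha}(\cA^\alpha_{o_\alpha}) = \psi^\alpha_o(\cA^\alpha_{o_\alpha})$, density of the union of these images is immediate. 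Here I would use Lemma~\ref{Bea:0c}, which guarantees that $\Lambda_o$ is nonempty and upward directed, so the union is genuinely a directed union of subalgebras.

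The crux is property (iii), the universal property. Given the compatible cocone $\{(\Phi^\alpha,\rF^\alpha)\}$ into $(\cB,y)_K$, I would define $\Phi$ fibrewise: fix $o \in K$; for each $\alpha \in \Lambda_o$ the morphism $\Phi^\alpha_{o_\alpha} : \cA^\alpha_{o_\alpha} \to \cB_o$ is available, and the compatibility condition $\Phi^\si \circ \psi^{\si\alpha} = \Phi^\alpha$ says precisely that these maps form a cocone over the $\rC^*$-inductive system $\{\cA^\alpha_{o_\alpha}, \psi^{\si\alpha}_{o_\alpha}\}_{\Lambda_o}$ defining $\cA_o$. By the universal property of the $\rC^*$-inductive limit of algebras there is then a unique $\Phi_o : \cA_o \to \cB_o$ with $\Phi_o \circ \psi^\alpha_o = \Phi^\alpha_{o_\alpha}$ for all $\alpha \in \Lambda_o$. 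The work is then to check that the family $\{\Phi_o\}$ is itself a morphism of nets, i.e. that it respects the inclusion maps $\jmath_{\tilde o o}$ and $y_{\tilde o o}$; this I would verify first on the dense subalgebra $\cA'_o$ using $(\ref{Bea:4})$ and the fact that each $\Phi^\alpha$ is a net morphism, then extend by continuity (the maps are contractive $\ast$-homomorphisms). Uniqueness of $\Phi$ follows from density, property (ii): any morphism satisfying $\Phi \circ (\Psi^\alpha,\rF^\alpha) = (\Phi^\alpha,\rF^\alpha)$ is determined on each $\cA'_o$ and hence everywhere. The main obstacle I anticipate is the bookkeeping of indices when verifying net-morphism compatibility for $\Phi$, since one must choose, for $o \leq \tilde o$, a common index $\si \in \Lambda_{\tilde o}$ with $\si \succeq \alpha$, and confirm that the definitions are independent of this choice exactly as in the well-definedness argument preceding Lemma~\ref{Bea:5}.
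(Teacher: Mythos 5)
Your proposal is correct and follows essentially the same route as the paper's proof: the same definition $\Psi^\alpha_{o_\alpha}:=\psi^\alpha_{\rF^\alpha(o_\alpha)}$, properties (i) and (ii) read off from $(\ref{Bea:2})$ and $(\ref{Bea:2a})$, and property (iii) obtained by applying the universal property of the fibrewise $\rC^*$-inductive limits and then verifying net-morphism compatibility on the dense subalgebras $\cA'_o$, with uniqueness from density. The only blemish is a transposed pair of subscripts in your intertwining identity (it should read $\psi^\alpha_{\rF^\alpha(a)}\circ\jmath^\alpha_{ae}=\jmath_{\rF^\alpha(a)\rF^\alpha(e)}\circ\psi^\alpha_{\rF^\alpha(e)}$, which is exactly relation $(\ref{Bea:5a})$), a harmless slip that does not affect the argument.
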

\begin{proof}
Define 
\begin{equation}
\label{Bea:6a}
\Psi^{\alpha}_a := \psi^\alpha_{\rF^\alpha(a)} \ , \qquad a\in K^\alpha  \ . 
\end{equation}
By relation (\ref{Bea:5a}) we have  
\[
\Psi^\alpha_a\circ \jmath^\alpha_{ae} = 
\psi^\alpha_{\rF^\alpha(a)}\circ \jmath^\alpha_{ae} = 
 \jmath_{\rF^\alpha(a)\rF^\alpha(e)}\circ \psi^\alpha_{\rF^\alpha(e)} = 
 \jmath_{\rF^\alpha(a)\rF^\alpha(e)}\circ \Psi^\alpha_{e} \ , 
\]  
and this proves that 
$(\Psi^{\alpha}, \rF^\alpha):(\cA^\alpha,\jmath^\alpha)_{K^\alpha}\to (\cA,\jmath)_K$  
is a unital morphism. \smallskip

\noindent 
In this way $(i)$ and $(ii)$ are easy to prove. 
$(iii)$ follows from the universal property  of inductive limits of $\rC^*$-algebras: 
in fact, if $\alpha\in \Lambda_o$ then
$\Phi^\alpha_{o_\alpha}:\cA^\alpha_{o_\alpha}\to \cB_o$ 
is a collection of  morphisms satisfying the relation 
\[
\Phi^\si_{o_\si}\circ \psi^{\si\alpha}_{o_\alpha} = \Phi^\alpha_{o_\alpha}\ ,
\]
by the universal property of the inductive limit $\cA_o$ there is a unique morphism 
$\Phi_o: \cA_o\to\cB_o$ such that 
$\Phi_o \circ  \Psi^\alpha_{o_\alpha} = \Phi^\alpha_{o_\alpha}$. So let  $\Phi$ denotes the collection 
$\Phi_o$, with $o\in K$. We now prove that $\Phi$ is a morphism 
of nets.  Take $o,\tilde o\in K$ and $\alpha\in \Lambda_o$, $\si\in \Lambda_{\tilde o}$  as in  
Definition (\ref{Bea:5a}). Then we have 
\begin{align*}
\Phi_{\tilde o}\circ \jmath_{\tilde o o}\circ \Psi^\alpha_{o_\alpha}  & 
= 
\Phi_o\circ \Psi^\si_{{\tilde o}_\si} \circ \jmath^{\si}_{{\tilde o}_\si o_\sigma} \circ 
\psi^{\si\alpha}_{o_\alpha} 
= \Phi_o\circ \Psi^\si_{{\tilde o}_\si} \circ 
\psi^{\si\alpha}_{{\tilde o}_\alpha} \circ \jmath^{\alpha}_{{\tilde o}_\alpha o_\alpha }  \\
& = 
\Phi^\si_{{\tilde o}_\si} \circ 
\psi^{\si\alpha}_{{\tilde o}_\alpha} \circ \jmath^{\alpha}_{{\tilde o}_\alpha o_\alpha }  
= \Phi^\alpha_{{\tilde o}_\alpha}  \circ \jmath^{\alpha}_{{\tilde o}_\alpha o_\alpha }  
= y_{{\tilde o} o} \circ \Phi^\alpha_{{o}_\alpha}  \\
& =  y_{{\tilde o} o} \circ \Phi_o \circ \Psi^\alpha_{o_\alpha}  
\end{align*}
for any $\alpha$. Density implies that $\Phi:(\cA,\jmath)_K\to (\cB,y)_K$ 
is a morphism, and it is clear that 
$\Phi\circ (\Psi^\alpha,\rF^\alpha) = (\Phi^\alpha,\rF^\alpha)$.
Finally, uniqueness follows in a similar fashion. 
\end{proof}
Given a net of $\rC^*$-algebras satisfying
the properties of the previous proposition, 
an inductive system $\big\{(\cA^\alpha,\jmath^\alpha)_{K^\alpha}, 
\, (\psi^{\si\alpha}, \rf^{\si\alpha})\big\}_\Lambda$, will be called the 
\emph{inductive limit} of the system and, from now on, will be denoted by 
$\underrightarrow{\lim}(\cA^\alpha,\jmath^\alpha)_{K^\alpha}$.  
The property $(iii)$ of the proposition is the \emph{universal} property 
of  inductive limits.


\subsection{Injectivity of inductive limits} 
\label{Beb}

By functoriality, to any inductive system of nets there corresponds an inductive system 
of  enveloping net bundles. 
The functor of taking  the enveloping net bundle commutes with  inductive limits. 
As a consequence, injectivity is preserved under inductive limits indicating   a strategy 
for analyzing  injectivity. \bigskip

Consider an inductive system of nets of $\rC^*$-algebras 
$\big\{ (\cA^\alpha,\jmath^\alpha)_{K^\alpha} , \, 
(\psi^{\si\alpha},\rf^{\si\alpha})\big\}_\Lambda$. 
Let $(\overline{\cA}^\alpha,\overline{\jmath}^\alpha)_{K^\alpha}$ be the enveloping 
net bundle of $(\cA^\alpha,\jmath^\alpha)_{K^\alpha}$, and let 
\[
\e^\alpha : 
(\cA^\alpha,\jmath^\alpha)_{K^\alpha} \to 
(\overline{\cA}^\alpha,\overline{\jmath}^\alpha)_{K^\alpha}
\]
be the canonical embedding. By the universal property  
of the enveloping net bundle, for any $\alpha\preceq \si$ there is a unique morphism
$(\overline{\psi}^{\si\alpha},\rf^{\si\alpha}) : 
 (\overline{\cA}^\alpha,\overline{\jmath}^\alpha)_{K^\alpha}\to 
 (\overline{\cA}^\si,\overline{\jmath}^\si)_{K^\si}$
satisfying the relation.  
\begin{equation}
\label{Beb:1}
(\overline{\psi}^{\si\alpha},\rf^{\si\alpha})\circ \e^{\alpha} = 
\e^\si\circ (\psi^{\si\alpha},\rf^{\si\alpha}) \ . 
\end{equation}
This implies 
\begin{equation}
\label{Beb:2}
(\overline{\psi}^{\si\alpha},\rf^{\si\alpha})\circ (\overline{\psi}^{\alpha\beta},\rf^{\alpha\beta}) = 
(\overline{\psi}^{\si\beta},\rf^{\si\beta}) \ , \qquad \beta\preceq \alpha\preceq\si 
\ ,
\end{equation}
hence  $\{(\overline{\cA}^\alpha,\overline{\jmath}^\alpha)_{K^\alpha}, \, (\overline{\psi}^{\si\alpha},\rf^{\si\alpha})\}_\Lambda$ is an inductive system of $\rC^*$-net bundles; 
\emph{the system of the enveloping net bundles}. \smallskip 
 
Note that even if the linking morphisms of the original system 
are monomorphism, in general the linking morphisms of the system of  
enveloping net bundles may not be  monomorphisms. This depends on the injectivity 
of the nets of the original system.\smallskip

Let $\underrightarrow{\lim} (\overline{\cA}^\alpha,\overline{\jmath}^\alpha)_{K^\alpha}$ be the $\rC^*$-net bundle inductive limit of the system of the enveloping net bundles, and denote 
by \smash{$(\overline{\Psi}^\alpha, \rF^\alpha)$} be the embedding of the nets 
$(\overline{\cA}^\alpha,\overline{\jmath}^\alpha)_{K^\alpha}$ into this limit. 
We now show that this limit is nothing but 
the enveloping net bundle of the net 
$\underrightarrow{\lim} (\cA^\alpha,\jmath^\alpha)_{K^\alpha}$. 

\begin{proposition}
\label{Beb:3}
Let 
$\{(\cA^\alpha,\jmath^\alpha)_{K^\alpha},(\psi^{\si\alpha},\rf^{\si\alpha})\}_\Lambda$
be an inductive system. Then 
the inductive limit $\underrightarrow{\lim} (\overline{\cA}^\alpha,\overline{\jmath}^\alpha)_{K^\alpha}$ 
is isomorphic  to the enveloping 
net bundle of the inductive limit $\underrightarrow{\lim} (\cA^\alpha,\jmath^\alpha)_{K^\alpha}$. 
\end{proposition}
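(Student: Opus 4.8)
The plan is to show that the $\rC^*$-net bundle $(\cE,w)_K := \underrightarrow{\lim}(\overline{\cA}^\alpha,\overline{\jmath}^\alpha)_{K^\alpha}$ (it is a net bundle by Lemma~\ref{Bea:5}(ii)), equipped with a canonical embedding $\e$ to be constructed, satisfies the universal property (\ref{Bd:11}) of the enveloping net bundle of $(\cA,\jmath)_K := \underrightarrow{\lim}(\cA^\alpha,\jmath^\alpha)_{K^\alpha}$; since that property characterizes the enveloping net bundle up to isomorphism, the proposition follows. Write $(\Psi^\alpha,\rF^\alpha)$ and $(\overline{\Psi}^\alpha,\rF^\alpha)$ for the canonical morphisms into $(\cA,\jmath)_K$ and $(\cE,w)_K$ coming from Prop.~\ref{Bea:6}. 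To build $\e$, I observe that the morphisms $(\overline{\Psi}^\alpha,\rF^\alpha)\circ\e^\alpha:(\cA^\alpha,\jmath^\alpha)_{K^\alpha}\to(\cE,w)_K$ form a compatible cone, since
\[
(\overline{\Psi}^\si,\rF^\si)\circ\e^\si\circ(\psi^{\si\alpha},\rf^{\si\alpha}) = (\overline{\Psi}^\si,\rF^\si)\circ(\overline{\psi}^{\si\alpha},\rf^{\si\alpha})\circ\e^\alpha = (\overline{\Psi}^\alpha,\rF^\alpha)\circ\e^\alpha,
\]
using (\ref{Beb:1}) and Prop.~\ref{Bea:6}(i) applied to the system of enveloping net bundles. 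By the universal property Prop.~\ref{Bea:6}(iii) there is then a unique morphism $\e:(\cA,\jmath)_K\to(\cE,w)_K$, with poset component $id_K$, such that $\e\circ(\Psi^\alpha,\rF^\alpha)=(\overline{\Psi}^\alpha,\rF^\alpha)\circ\e^\alpha$ for every $\alpha$.

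Next I would verify the first (``epimorphism'') clause of (\ref{Bd:11}). Let $(\varphi,\rh),(\theta,\rh):(\cE,w)_K\to(\cC,y)_P$ be morphisms into a $\rC^*$-net bundle with $(\varphi,\rh)\circ\e=(\theta,\rh)\circ\e$. Composing with $(\overline{\Psi}^\alpha,\rF^\alpha)$ and using the defining relation of $\e$, the morphisms $(\varphi,\rh)\circ(\overline{\Psi}^\alpha,\rF^\alpha)$ and $(\theta,\rh)\circ(\overline{\Psi}^\alpha,\rF^\alpha)$ agree after precomposition with $\e^\alpha$; hence the epimorphism clause of (\ref{Bd:11}) for the enveloping net bundle $(\overline{\cA}^\alpha,\overline{\jmath}^\alpha)_{K^\alpha}$ forces $\varphi\circ\overline{\Psi}^\alpha=\theta\circ\overline{\Psi}^\alpha$ for all $\alpha$. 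Since the images $\overline{\Psi}^\alpha_{o_\alpha}(\overline{\cA}^\alpha_{o_\alpha})$ are dense in $\cE_o$ (Prop.~\ref{Bea:6}(ii)) and fibre morphisms are norm continuous, $\varphi=\theta$.

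For the lifting clause, let $(\psi,\rf):(\cA,\jmath)_K\to(\cD,k)_S$ be a morphism into a $\rC^*$-net bundle. Each composite $(\psi,\rf)\circ(\Psi^\alpha,\rF^\alpha)$ has values in a net bundle, so by the universal property of $\e^\alpha$ it lifts uniquely to $(\Theta^\alpha,\rf\circ\rF^\alpha):(\overline{\cA}^\alpha,\overline{\jmath}^\alpha)_{K^\alpha}\to(\cD,k)_S$ with $(\Theta^\alpha,\cdot)\circ\e^\alpha=(\psi,\rf)\circ(\Psi^\alpha,\rF^\alpha)$. These lifts are compatible with the linking morphisms: $(\Theta^\si,\cdot)\circ(\overline{\psi}^{\si\alpha},\rf^{\si\alpha})$ and $(\Theta^\alpha,\cdot)$ agree after precomposition with $\e^\alpha$ (a short chase using (\ref{Beb:1}), the definitions of $\Theta^\si,\Theta^\alpha$, and Prop.~\ref{Bea:6}(i)), whence the epimorphism clause already proved for $(\overline{\cA}^\alpha,\overline{\jmath}^\alpha)_{K^\alpha}$ yields equality. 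Because each fibre $\cE_o$ is the $\rC^*$-inductive limit $\underrightarrow{\lim}_{\alpha\in\Lambda_o}\overline{\cA}^\alpha_{o_\alpha}$ and, by Lemma~\ref{Bea:0c}(ii), $\rF^\alpha(o_\alpha)=o$ sends every $\Theta^\alpha_{o_\alpha}$ into the single fibre $\cD_{\rf(o)}$, the universal property of the $\rC^*$-algebra inductive limit produces fibre maps $\psi^\uparrow_o:\cE_o\to\cD_{\rf(o)}$ with $\psi^\uparrow_o\circ\overline{\Psi}^\alpha_{o_\alpha}=\Theta^\alpha_{o_\alpha}$; that these assemble into a net morphism $(\psi^\uparrow,\rf)$ is checked exactly as in the proof of Prop.~\ref{Bea:6}(iii). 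By construction $(\psi^\uparrow,\rf)\circ\e$ and $(\psi,\rf)$ agree after composition with the $(\Psi^\alpha,\rF^\alpha)$, so density gives $(\psi^\uparrow,\rf)\circ\e=(\psi,\rf)$, and uniqueness of the lift is immediate from the epimorphism clause. This establishes (\ref{Bd:11}) and hence the asserted isomorphism.

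The main obstacle is bookkeeping rather than analysis: one must track the poset components throughout and apply each universal property to the correct object (the $\e^\alpha$ for the fibre-level facts, Prop.~\ref{Bea:6} for the limit). The one genuinely delicate point is the lift into a target $(\cD,k)_S$ over a poset $S\neq K$, which is not literally covered by Prop.~\ref{Bea:6}(iii) (stated for targets over $K$); there one descends to the fibrewise $\rC^*$-inductive-limit universal property, using that $\rF^\alpha(o_\alpha)=o$ confines all the $\Theta^\alpha_{o_\alpha}$ to the single fibre $\cD_{\rf(o)}$. Density of the $\overline{\Psi}^\alpha$-images together with norm continuity is the glue that promotes all fibrewise and cofinal identities to genuine equalities of net morphisms.
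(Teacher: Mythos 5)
Your proposal is correct, but it takes a genuinely different route from the paper's. The paper works with \emph{both} objects at once: it builds the morphism $\theta:\underrightarrow{\lim}(\cA^\alpha,\jmath^\alpha)_{K^\alpha}\to\underrightarrow{\lim}(\overline{\cA}^\alpha,\overline{\jmath}^\alpha)_{K^\alpha}$ exactly as you build your $\e$ (same cone $(\overline{\Psi}^\alpha,\rF^\alpha)\circ\e^\alpha$, same appeal to Prop.~\ref{Bea:6}), then factors $\theta$ through the canonical embedding of the enveloping net bundle $(\overline{\cA},\overline{\jmath})_K$ of the limit to get a morphism $\Theta$, constructs a morphism $\Theta'$ in the opposite direction by assembling the lifts $(\chi^\alpha,\rF^\alpha)$ of $\e\circ(\Psi^\alpha,\rF^\alpha)$ via Prop.~\ref{Bea:6}(iii), and finally shows $\Theta\circ\Theta'$ and $\Theta'\circ\Theta$ are identities using the uniqueness clauses of the two universal properties. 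A consequence of that design is that every application of Prop.~\ref{Bea:6}(iii) in the paper has target over $K$ with poset components $\rF^\alpha$, so it stays literally within the scope of that proposition. You instead never touch $(\overline{\cA},\overline{\jmath})_K$ at all: you verify that the limit of the envelopings, equipped with your $\e$, satisfies both clauses of (\ref{Bd:11}), and then invoke the paper's stated uniqueness-up-to-isomorphism of net bundles with this property. This is the more conceptual ``a colimit of universal objects is universal'' argument, and it correctly identifies the one real wrinkle it creates: the lifting clause must handle targets $(\cD,k)_S$ over a poset $S\neq K$, which Prop.~\ref{Bea:6}(iii) as stated does not cover, so you rightly descend to the fibrewise $\rC^*$-inductive-limit universal property, using $\rF^\alpha(o_\alpha)=o$ to confine all lifts $\Theta^\alpha_{o_\alpha}$ to the single fibre $\cD_{\rf(o)}$. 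What your route buys is economy (no inverse pair to bookkeep, and the epimorphism and uniqueness clauses come for free from density); what it costs is this extra fibrewise assembly work, which the paper's route avoids by keeping all morphisms over $K$. Both proofs ultimately rest on the same three ingredients: the intertwining relation (\ref{Beb:1}), the cone identities of Prop.~\ref{Bea:6}(i), and norm-density of the images from Prop.~\ref{Bea:6}(ii).
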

\begin{proof}
Let $(\overline{\cA},\overline{\jmath})_K$ be the enveloping net bundle 
of $\underrightarrow{\lim} (\cA^\alpha,\jmath^\alpha)_{K^\alpha}$ 
and 
$\e: \underrightarrow{\lim} (\cA^\alpha,\jmath^\alpha)_{K^\alpha}\to (\overline{\cA},\overline{\jmath})_K$ 
denote the canonical embedding.
We start by showing that $(\overline\cA,\overline\jmath)_{K}$ embeds into 
$\underrightarrow{\lim} (\overline{\cA}^\alpha,\overline{\jmath}^\alpha)_{K^\alpha}$. 
To this end, we use the universal property of  inductive limits of nets (Proposition \ref{Bea:6}.$iii$):
given $\alpha\in \Lambda$, note that 
$(\overline{\Psi}^\alpha,\rF^\alpha)\circ \e^\alpha : 
 (\cA^\alpha,\jmath^\alpha)_{K^\alpha} \to 
 \underrightarrow{\lim}(\overline{\cA}^{\alpha},\overline{\jmath}^\alpha)_{K^\alpha}$ 
is a morphism such that 
\begin{equation}
\label{Beb:4}
(\overline{\Psi}^\alpha,\rF^\alpha)\circ \e^\alpha\circ (\psi^{\alpha\beta},\rf^{\alpha\beta}) = 
(\overline{\Psi}^\alpha,\rF^\alpha)\circ (\overline{\psi}^{\alpha\beta},\rf^{\alpha\beta}) \circ \e^\beta =
(\overline{\Psi}^\beta,\rF^\beta)\circ \e^\beta
\ ,
\end{equation}
so, by the universal the property  of  the inductive limit  
$\underrightarrow{\lim} (\cA^\alpha,\jmath^\alpha)_{K^\alpha}$,  
there exists a morphism
$\theta: \underrightarrow{\lim} (\cA^\alpha,\jmath^\alpha)_{K^\alpha} \to 
\underrightarrow{\lim}(\overline{\cA}^{\alpha},\overline{\jmath}^\alpha)_{K^\alpha}$ 
such that 
\begin{equation}
\label{Beb:5}
\theta \circ (\Psi^\alpha,\rF^\alpha) = (\overline{\Psi}^\alpha,\rF^\alpha)\circ \e^\alpha 
\ , \qquad \alpha\in \Lambda 
\ .
\end{equation}
On the other hand, since $\underrightarrow{\lim}(\overline{\cA}^{\alpha},\overline{\jmath}^\alpha)_{K^\alpha}$ 
is a $\rC^*$-net bundle, the universal property of the enveloping net bundle 
$(\overline{\cA},\overline{\jmath})_{K}$ 
says that there is a unique morphism 
$\Theta : 
 (\overline{\cA},\overline{\jmath})_{K} \to 
 \underrightarrow{\lim}(\overline{\cA}^{\alpha},\overline{\jmath}^\alpha)_{K^\alpha}$
such that 
\begin{equation}
\label{Beb:6}
\Theta\circ \e = \theta. 
\end{equation}
We now prove that there is a morphism in the opposite direction. Consider the morphisms
\[
 \e\circ (\Psi^\alpha,\rF^\alpha) : 
 (\cA^\alpha,\jmath^\alpha)_{K^\alpha} \to 
 (\overline{\cA},\overline{\jmath})_{K}
 \ \ , \ \
 \alpha \in S
 \ .
\]
The universal property  of the enveloping net bundle $(\overline{\cA}^\alpha,\overline{\jmath}^\alpha)_{K^\alpha}$
implies that there are morphisms
$(\chi^\alpha,\rF^\alpha) :
 (\overline{\cA}^\alpha,\overline{\jmath}^\alpha)_{K^\alpha} \to 
 (\overline{\cA},\overline{\jmath})_{K}$ ,
intertwining the canonical embeddings    
\begin{equation}
\label{Beb:7}
(\chi^\alpha,\rF^\alpha)\circ \e^\alpha= \e\circ (\Psi^\alpha,\rF^\alpha)
\ ,
\end{equation}
and compatible with the inductive structures of the enveloping net bundles
\begin{equation}
\label{Beb:8}
(\chi^\alpha,\rF^\alpha)\circ (\overline{\psi}^{\alpha\si},\rf^{\alpha\si}) = 
(\chi^\si,\rF^\si) 
\ .
\end{equation}
By the universal property of  inductive the limit     
$\underrightarrow{\lim}(\overline{\cA}^{\alpha},\overline{\jmath}^\alpha)_{K^\alpha}$, 
there is a unique morphism 
$\Theta' : 
 \underrightarrow{\lim}(\overline{\cA}^{\alpha},\overline{\jmath}^\alpha)_{K^\alpha} \to 
 (\overline{\cA},\overline{\jmath})_{K}$ 
satisfying the equation
\begin{equation}
\label{Beb:9}
 \Theta' \circ (\overline{\Psi}^\alpha,\rF^\alpha) = (\chi^\alpha,\rF^\alpha) \ , \qquad \alpha\in \Lambda \ .
\end{equation}
We now prove that $\Theta$ is the inverse of $\Theta'$. 
First, using equations 
(\ref{Beb:6}), (\ref{Beb:5})  (\ref{Beb:9}) and (\ref{Beb:7}),
we note that for any $\alpha\in \Lambda$, 
\begin{align*}
\Theta'\circ \Theta\circ \e\circ (\Psi^\alpha,\rF^\alpha) & =
\Theta'\circ \theta\circ (\Psi^\alpha,\rF^\alpha) \\
&  =\Theta'\circ (\overline{\Psi}^\alpha,\rF^\alpha)\circ \e^\alpha
 = (\chi^\alpha,\rF^\alpha)\circ \e^\alpha \\
 &  = \e\circ (\Psi^\alpha,\rF^\alpha) \ ,
\end{align*}
so $\Theta'\circ \Theta\circ \e= \e$. By (\ref{Bd:11})  
we conclude that $\Theta'\circ \Theta$ is the identity automorphism of $(\overline\cA,\overline{\jmath})_K$.  

Conversely,  equations (\ref{Beb:9}), (\ref{Beb:7}), (\ref{Beb:6})
and (\ref{Beb:5}), for any $\alpha\in \Lambda$, imply  
\begin{align*}
\Theta\circ \Theta' \circ (\overline{\Psi}^\alpha,\rF^\alpha)\circ \e^\alpha 
&  = 
\Theta\circ (\chi^\alpha,\rF^\alpha)\circ \e^\alpha \\
&  = \Theta\circ \e\circ (\Psi^\alpha,\rF^\alpha)
 = \theta\circ (\Psi^\alpha,\rF^\alpha)\\
& = (\overline{\Psi}^\alpha,\rF^\alpha)\circ \e^\alpha \ , 
\end{align*}
thus (\ref{Bd:11}), implies that
$\Theta\circ \Theta' \circ (\overline{\Psi}^\alpha,\rF^\alpha) = (\overline{\Psi}^\alpha,\rF^\alpha)$ 
for any $\alpha$, and $\Theta\circ \Theta' $ is the identity automorphism of $\underrightarrow{\lim}(\overline{\cA}^{\alpha},\overline{\jmath}^\alpha)_{K^\alpha}$.
\end{proof}
We now prove the main result of the present section.
\begin{theorem}
\label{Beb:10}
Let $\{(\cA^\alpha,\jmath^\alpha)_{K^\alpha},\,(\psi^{\si\alpha},\rf^{\si\alpha})\}_\Lambda$ 
be an inductive system of net of $\rC^*$-algebras. If the linking morphisms 
are monomorphisms and the nets 
$(\cA^\alpha,\jmath^\alpha)_{K^\alpha}$ 
are all injective, then  the inductive limit $\underrightarrow{\lim} (\cA^\alpha,\jmath^\alpha)_{K^\alpha}$ is an injective net.  
\end{theorem}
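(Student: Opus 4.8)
The plan is to exploit the fact, already established in Proposition~\ref{Beb:3}, that the functor assigning the enveloping net bundle commutes with inductive limits. The strategy is to show that the canonical embedding of the inductive limit net is a monomorphism by reducing the question to the injectivity hypothesis on each $(\cA^\alpha,\jmath^\alpha)_{K^\alpha}$, working fibre by fibre and using the explicit description of the fibres of an inductive limit given in \S\ref{Bea}.

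First I would set $(\cA,\jmath)_K := \underrightarrow{\lim}(\cA^\alpha,\jmath^\alpha)_{K^\alpha}$ and let $\e:(\cA,\jmath)_K \to (\overline{\cA},\overline{\jmath})_K$ be its canonical embedding. By Proposition~\ref{Beb:3}, the target $(\overline{\cA},\overline{\jmath})_K$ may be identified with $\underrightarrow{\lim}(\overline{\cA}^\alpha,\overline{\jmath}^\alpha)_{K^\alpha}$, the inductive limit of the system of enveloping net bundles, whose linking morphisms $(\overline{\psi}^{\si\alpha},\rf^{\si\alpha})$ satisfy the intertwining relation (\ref{Beb:1}). The goal is then to prove that each fibre map $\e_o:\cA_o \to \overline{\cA}_o$ is injective. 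I would fix $o\in K$ and note that, by (\ref{Bea:2a}), the fibre $\cA_o$ is the closure of the dense $^*$-subalgebra $\cA'_o = \bigcup_{\alpha\in\Lambda_o}\psi^\alpha_o(\cA^\alpha_{o_\alpha})$, and correspondingly $\overline{\cA}_o$ is built from the fibres $\overline{\cA}^\alpha_{o_\alpha}$ via the maps $\overline{\psi}^\alpha_o$.

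The heart of the argument is an isometry estimate on the norm of $\e_o$. Using the norm formula (\ref{Bea:3}) for the inductive limit, together with the intertwining relation (\ref{Beb:1}) between $\e^\alpha$ and the linking morphisms, I would compute, for $A\in\cA^\alpha_{o_\alpha}$,
\[
\|\e_o(\psi^\alpha_o(A))\|_\infty = \|\overline{\psi}^\alpha_o(\e^\alpha_{o_\alpha}(A))\|_\infty = \lim_{\si\succeq\alpha}\|\overline{\psi}^{\si\alpha}_{o_\alpha}(\e^\alpha_{o_\alpha}(A))\| \ .
\]
Since the linking morphisms $(\psi^{\si\alpha},\rf^{\si\alpha})$ are monomorphisms by hypothesis and each $(\cA^\alpha,\jmath^\alpha)_{K^\alpha}$ is injective, so that every $\e^\alpha$ is faithful on the fibres, one can propagate faithfulness along the system: the key point is that for injective nets the linking morphisms of the system of enveloping net bundles are again monomorphisms, so the right-hand side equals $\lim_{\si\succeq\alpha}\|\psi^{\si\alpha}_{o_\alpha}(A)\| = \|\psi^\alpha_o(A)\|_\infty$. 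This shows $\e_o$ is isometric on the dense subalgebra $\cA'_o$ and hence a monomorphism, establishing injectivity of $(\cA,\jmath)_K$.

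The main obstacle I expect is the middle step: verifying that injectivity of the individual nets forces the linking morphisms $(\overline{\psi}^{\si\alpha},\rf^{\si\alpha})$ of the system of enveloping net bundles to be monomorphisms, so that the norm computed in $\overline{\cA}^\alpha_{o_\alpha}$ is preserved along the inductive system exactly as the original norm is. This is precisely where the hypothesis ``the nets $(\cA^\alpha,\jmath^\alpha)_{K^\alpha}$ are all injective'' must be used in a nontrivial way, via the universal property together with the remark that a morphism faithful on the fibres into an injective net yields injectivity (Remark~\ref{Bd:12}.(1)). Once this is in hand, the isometry estimate and the density of $\cA'_o$ reduce everything to the routine fact that an isometric $^*$-homomorphism defined on a dense subalgebra extends to a monomorphism.
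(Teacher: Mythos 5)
Your overall architecture matches the paper's: identify $(\overline{\cA},\overline{\jmath})_K$ with $\underrightarrow{\lim}(\overline{\cA}^{\alpha},\overline{\jmath}^\alpha)_{K^\alpha}$ via Proposition \ref{Beb:3}, compute the norm of the canonical embedding on the dense subalgebra $\cA'_o$ through the norm formula (\ref{Bea:3}), and conclude by density. However, the way you close the norm estimate contains a genuine gap. You reduce everything to the claim that, because the nets $(\cA^\alpha,\jmath^\alpha)_{K^\alpha}$ are injective, the linking morphisms $(\overline{\psi}^{\si\alpha},\rf^{\si\alpha})$ of the system of enveloping net bundles are themselves monomorphisms, and you propose to obtain this from the universal property together with Remark \ref{Bd:12}.(1). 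That remark does not deliver this: it asserts injectivity of a \emph{net} admitting a fibre-faithful morphism into an injective net, i.e.\ faithfulness of its canonical embedding $\e^\alpha$; it says nothing about faithfulness of an induced morphism $\overline{\psi}^{\si\alpha}$ between enveloping net bundles, whose fibres $\overline{\cA}^\alpha_{o_\alpha}$ are in general much larger than the images $\e^\alpha_{o_\alpha}(\cA^\alpha_{o_\alpha})$ (injectivity on generators does not pass to the algebras they generate). The paper itself warns, right after (\ref{Beb:2}), that these linking morphisms may fail to be monomorphisms, and its proof of the theorem never establishes, nor uses, that they are under the present hypotheses. So, as written, your ``key point'' is an unproven claim, and the route you sketch toward it targets the wrong statement.

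The good news is that the claim is unnecessary: faithfulness of $\overline{\psi}^{\si\alpha}$ is only ever needed on elements of the form $\e^\alpha_{o_\alpha}(A)$, and there the intertwining relation (\ref{Beb:1}) --- which you already invoke at the start of your computation --- does the whole job. Pointwise one has $\overline{\psi}^{\si\alpha}_{o_\alpha}\circ\e^\alpha_{o_\alpha}(A)=\e^\si_{o_\si}\circ\psi^{\si\alpha}_{o_\alpha}(A)$, hence
\[
\big\|\overline{\psi}^{\si\alpha}_{o_\alpha}\circ\e^\alpha_{o_\alpha}(A)\big\|
=\big\|\e^\si_{o_\si}\circ\psi^{\si\alpha}_{o_\alpha}(A)\big\|
=\big\|\psi^{\si\alpha}_{o_\alpha}(A)\big\|=\|A\| \ ,
\]
using only that $\e^\si$ is isometric (injectivity of the $\si$-th net) and that $\psi^{\si\alpha}$ is isometric (your hypothesis on the linking morphisms); no global property of $\overline{\psi}^{\si\alpha}$ beyond relation (\ref{Beb:1}) is involved. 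Substituting this into your display gives $\|\e_o(\psi^\alpha_o(A))\|_\infty=\|A\|=\|\psi^\alpha_o(A)\|_\infty$, and your density argument then finishes the proof. With this one replacement your argument coincides with the paper's proof.
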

\begin{proof}
It is enough to prove that the morphism 
\[
\theta: 
\underrightarrow{\lim} (\cA^\alpha,\jmath^\alpha)_{K^\alpha} \to 
\underrightarrow{\lim}(\overline{\cA}^{\alpha},\overline{\jmath}^\alpha)_{K^\alpha}
\ ,
\]
defined in the previous proposition, is a monomorphism.  
To this end, considering $o\in K$ and $A\in\cA^\alpha_{o_\alpha}$, and applying (\ref{Beb:5}),
Proposition \ref{Bea:6}.$i$ and (\ref{Beb:1}), we find
\begin{align*}
\| \theta_o \circ \Psi^\alpha_{o_{\alpha}}(A) \|_\infty & =
\|\overline{\Psi}^\alpha_{o} \circ \e^\alpha_{o_\alpha}(A) \|_{\infty}
 =\inf_{\si\geq \alpha}\|\overline{\psi}^{\si\alpha}_{o_\alpha} \circ \e^\alpha_{o_\alpha}(A) \| \\
&  =\inf_{\si\geq \alpha}\|\e^\si_{o_\si} \circ \psi^{\si\alpha}_{o_\alpha} (A) \| \\
&  =\|A \|  \ , 
\end{align*}
as both $\e^\si_{o_\si}$ and $\psi^{\si\alpha}_{o_\alpha}$ are isometries.
Since $\cup_\alpha \Psi^\alpha_{o_{\alpha}}(\cA^\alpha_{o_\alpha})$
is dense in $\cA_o$ (Proposition \ref{Bea:6}.$ii$), we conclude that $\theta_o$ is an isometry 
for any $o\in K$; hence $\theta$ is a monomorphism.
\end{proof}

\subsection{On the continuity condition}
\label{AppB}

We prove that the notion of a continuous symmetry group $G$ of a poset $K$ as given in the present 
paper implies that introduced in \cite{RVL}. We shall use the simplicial set 
associated to the poset and the corresponding notion of homotopy equivalence of paths. 
For all these notions and related results we refer the reader to the cited paper.

\begin{lemma}
Let $G$ be a continuous symmetry group of a poset $K$. Then 
for any path $p:o\to a$ there  $\tilde o,\tilde a\in K$, with 
$o\leq \tilde o$ and $a\leq \tilde a$,   
and an open neighbourhood $U$ of the identity $e$ of $G$ such that 
$ga\leq \tilde a$ and $go\leq o$ and 
\[
 (\tilde a a) *p*\overline{(\tilde oo)} \sim  (\tilde a g(a)) *gp*\overline{(\tilde o g(o))} \ , \qquad g\in U  \ . 
\]
\end{lemma}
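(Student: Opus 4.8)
The plan is to realise the required homotopy as a \emph{ladder} that slides the path $p$ onto its translate $gp$ one $1$-simplex at a time, using the edges to $\tilde o$ and $\tilde a$ to close up the two ends. Write $p$ as a sequence of $1$-simplices $b_1,\dots ,b_n$ with vertices $o=o_0,o_1,\dots ,o_n=a$ and supports $s_k:=|b_k|\geq o_{k-1},o_k$; then $gp$ has $1$-simplices $gb_k$, with vertices $go_{k-1},go_k$ and supports $gs_k$. For $0\leq k\leq n$ I would introduce a \emph{rung} $\rho_k$, a $1$-simplex joining $go_k$ to $o_k$, and form the interpolating paths
\[
P_k \;:=\; (\tilde a\,a)*b_n*\cdots *b_{k+1}*\rho_k*gb_k*\cdots *gb_1*\overline{(\tilde o\,go)}\ ,
\]
each running from $\tilde o$ to $\tilde a$. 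Then $P_0$ and $P_n$ are the two sides of the claimed equivalence (up to the two end adjustments below), and it remains to establish $P_{k-1}\sim P_k$ for every $k$.

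The choice of supports is where the three continuity axioms enter. First, for each edge I would use (\ref{Ab:1}) to pick a \emph{square support} $t_k\gg s_k$; after passing to a neighbourhood $V_k$ of $e$ this gives $gs_k\leq t_k$, so all four corners $o_{k-1},o_k,go_{k-1},go_k$ and the edges $b_k,gb_k$ lie below $t_k$. Since $o_k\leq s_k\leq t_k$ and $o_k\leq s_{k+1}\leq t_{k+1}$ (and $g$ is an automorphism) one has $o_k\ll t_k$ and $o_k\ll t_{k+1}$; hence for an interior vertex (\ref{Ab:2}) yields a rung support $r_k$ with $o_k\ll r_k\ll t_k,t_{k+1}$, while for the two end vertices (\ref{Ab:1a}) gives $o_0\ll r_0\ll t_1$ and $o_n\ll r_n\ll t_n$. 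I would then \emph{define} $\tilde o:=t_1$ and $\tilde a:=t_n$ (legitimate, as $o\leq s_1\leq t_1$ and $a\leq s_n\leq t_n$), take each $\rho_k$ to be the edge $go_k\to o_k$ with support $r_k$, and let $U$ be the finite intersection of all the neighbourhoods produced above. For $g\in U$ one then has $go\leq\tilde o$, $ga\leq\tilde a$, and $go_k\leq r_k$ for every $k$.

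With these choices each rung of the ladder is trivially fillable. In the step $P_{k-1}\sim P_k$ the only change is the replacement of the subpath $b_k*\rho_{k-1}$ (running $go_{k-1}\to o_{k-1}\to o_k$) by $\rho_k*gb_k$ (running $go_{k-1}\to go_k\to o_k$), a rerouting around the quadrilateral $o_{k-1},o_k,go_{k-1},go_k$, and by the inequalities above \emph{all} of its vertices, edges, and both rungs $\rho_{k-1},\rho_k$ have support $\leq t_k$. Since the sub\-poset $\{x:x\leq t_k\}$ is directed, the corresponding part of the simplicial set is simply connected (see \cite{RVL}), so the boundary of the quadrilateral bounds and $P_{k-1}\sim P_k$ follows by filling it with two $2$-simplices supported by $t_k$. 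The same principle dispatches the ends: the triangles $o,go,\tilde o$ and $a,ga,\tilde a$ have all their data below $\tilde o=t_1$ and $\tilde a=t_n$ respectively (here one uses $r_0\leq t_1$ and $r_n\leq t_n$), which identifies $P_0$ with $(\tilde a\,a)*p*\overline{(\tilde o\,o)}$ and $P_n$ with $(\tilde a\,ga)*gp*\overline{(\tilde o\,go)}$. Chaining the equivalences yields the statement.

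The one genuine difficulty I anticipate is the \emph{compatibility of the rungs across adjacent squares}: a single rung $\rho_k$ must sit below \emph{both} the support $t_k$ of the square on its left and the support $t_{k+1}$ of the square on its right, and in a non-directed poset such as $\cI$ there is in general no common upper bound of $t_k$ and $t_{k+1}$ to fall back on. This is precisely the obstruction removed by axiom (\ref{Ab:2}), which supplies a common \emph{lower} refinement $r_k$ of $t_k$ and $t_{k+1}$ still containing $o_k$ (and, after shrinking $U$, $go_k$); the two end cases are the analogous one-sided statement (\ref{Ab:1a}). Everything else is bookkeeping of finitely many neighbourhood conditions, whose intersection is the required $U$.
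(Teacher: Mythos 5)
Your proposal is correct and is essentially the paper's own argument: the paper proves the lemma by induction on the number of $1$-simplices of $p$, and each inductive step is exactly one of your ladder moves --- (\ref{Ab:1}) provides a majorant $O\gg|b|$ for the support of the new simplex, (\ref{Ab:2}) provides the junction element $o'$ with $o\ll o'\ll \tilde o,O$ (your rung support), and the elementary homotopies are filled using the fact that an upward directed (sub)poset is simply connected. The differences are purely organizational: you unroll the induction into an explicit chain of hybrid paths $P_k$ with rungs $go_k\to o_k$, and you invoke (\ref{Ab:1a}) for the two end rungs, which is not actually needed --- one may simply take $r_0=t_1$ and $r_n=t_n$, since $go\leq t_1$ and $ga\leq t_n$ already hold for $g\in U$.
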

\begin{proof}
We give a proof by induction. Let $b$ a 1-simplex.  By continuity of the action of $G$ there is $O\in K$ such that $|b|\ll O$. Note in particular that the faces 
of the 1-simplex satisfy $\partial_0b,\partial_1b\ll O$. Let $V$ 
be  the neighbourhood of identity of $G$ associated to $|b|\ll O$. Then 
\begin{equation}
\label{AppB:1}
 (O,\partial_0b)*b*\overline{(O,\partial_1b)} \sim (O,g(\partial_0b))*g(b)*\overline{(O,g(\partial_1b))} \ , \qquad g\in V \ .
\end{equation}
In fact, note that all the elements of the poset involved in the above relation are 
smaller than $O$ for any $g\in V$. Then  homotopy equivalence follows  
because any upward directed poset is simply connected.\\
\indent Assume that the above relation holds for paths which are composition of $n$ 1-simplices. 
Let $p:o\to a$ be such a path and let $b$ a 1-simplex such that $\partial_0b=o$. 
By hypothesis there are $o\leq \tilde o$ and $a\leq \tilde a$,   
and an open neighbourhood $W$ of the identity $e$ of $G$ such that 
$ga\leq \tilde a$ and $go\leq o$ and 
\begin{equation}
\label{AppB:2}
 (\tilde a a) *p*\overline{(\tilde oo)} \sim  (\tilde a g(a)) *g(p)*\overline{(\tilde o g(o))} \ , \qquad g\in W \ .
\end{equation}
Let $O$ and $V$ be as in the equation (\ref{AppB:1}).  Since $\tilde o,O \gg o$, there exists 
$o'$ such that $o\ll o' \ll \tilde o, O$. Let $V'$ be the neighbourhood of the indentity 
of $G$ associated to $o\ll o'$. If $U:= V\cap W\cap V'$, then the equation 
(\ref{AppB:1}) and (\ref{AppB:2}) are verified for any $g\in U$. Furthermore since 
$o\leq o'\leq \tilde o$, we have $(\tilde o,o')*(o'o)\sim (\tilde o,o)$. Since 
homotopy equivalence is stable under composition, we have   
$\overline{(\tilde o o)} * (\tilde o o')\sim \overline{(o' o)}$. 
This and equation (\ref{AppB:2}) yield
\begin{equation}
\label{AppB:3}
 (\tilde a a) *p*\overline{(o'o)} \sim  (\tilde a g(a)) *g(p)*\overline{(o' g(o))} \ , \qquad g\in U \ . 
\end{equation}
The same argument applied to equation (\ref{AppB:1}) yields
\begin{equation}
\label{AppB:4}
 (o',o)*b*\overline{(O,\partial_1b)} \sim (o',g(o))*g(b)*\overline{(O,g(\partial_1b))} \ , \qquad g\in U \ , 
\end{equation}
(recall that $o=\partial_0b$). The composition of the left hand sides of the equations 
(\ref{AppB:3}) (\ref{AppB:4}) gives 
\begin{equation}
\label{AppB:5}
(\tilde a a) *p*\overline{(o'o)} * (o',o)*b*\overline{(O,\partial_1b)} \sim 
(\tilde a a) *p*b*\overline{(O,\partial_1b)} \ , 
\end{equation}
while the composition of the right hand sides gives 
\begin{equation}
\label{AppB:6}
(\tilde a g(a)) *g(p)*\overline{(o' g(o))} * (o',g(o))*g(b)*\overline{(O,g(\partial_1b))} 
\sim (\tilde a g(a)) *g(p*b)*\overline{(O,g(\partial_1b))}
\end{equation}
for any $g\in U$. Finally,
the equations (\ref{AppB:3}),(\ref{AppB:4}),(\ref{AppB:5}) and (\ref{AppB:6}) give 
\[
(\tilde a a) *p*b*\overline{(O,\partial_1b)}\sim (\tilde a g(a)) *g(p*b)*\overline{(O,g(\partial_1b))} \ , \qquad g\in U \ , 
\]
completing the proof.
\end{proof}

\noindent {\small {\bf Aknowledgements.} 
We gratefully acknowledge the hospitality
and support  of  the Graduate School of 
Mathematical Sciences of the University of Tokyo, where part of this paper has been developed, 
in particular Yasuyuki Kawahigashi for his warm hospitality. We also would like to thank 
\emph{all} the operator algebra group of the University of Roma ``Tor Vergata'', 
Sebastiano Carpi and Fabio Ciolli,
for the several fruitful discussions on the topics treated in  this paper.}


\end{document}